\documentclass[a4paper,USenglish,cleveref,thm-restate]{lipics-v2021}
\pdfoutput=1 

\usepackage{style}
\usepackage{shortcuts}

\title{Cost Preserving Dependent Rounding for Allocation Problems}

\author{Lars Rohwedder}{University of Southern Denmark, Odense, Denmark}{rohwedder@sdu.dk}{}{}
\author{Arman Rouhani}{Maastricht University, The Netherlands}{a.rouhani@maastrichtuniversity.nl}{}{}
\author{Leo Wennmann}{University of Southern Denmark, Odense, Denmark}{wennmann@imada.sdu.dk}{}{}
\authorrunning{L.~Rohwedder, A.~Rouhani and L.~Wennmann} 
\Copyright{Lars Rohwedder and Arman Rouhani and Leo Wennmann} 

\funding{\textit{Lars Rohwedder and Leo Wennmann}: Supported by Dutch Research Council (NWO) project "The Twilight Zone of Efficiency: Optimality of Quasi-Polynomial Time Algorithms" [grant number OCEN.W.21.268].}

\begin{CCSXML}
    <ccs2012>
       <concept>
           <concept_id>10003752.10003809.10003636.10003813</concept_id>
           <concept_desc>Theory of computation~Rounding techniques</concept_desc>
           <concept_significance>500</concept_significance>
           </concept>
     </ccs2012>
\end{CCSXML}
\ccsdesc[500]{Theory of computation~Rounding techniques}
\keywords{Matching, Randomized Rounding, Santa Claus, Approximation Algorithms}

\category{Track A: Algorithms, Complexity and Games}
\relatedversion{}
\relatedversiondetails{Full Version}{https://arxiv.org/abs/2502.08267}

\EventEditors{Keren Censor-Hillel, Fabrizio Grandoni, Joel Ouaknine, and Gabriele Puppis}
\EventNoEds{4}
\EventLongTitle{52nd International Colloquium on Automata, Languages, and Programming (ICALP 2025)}
\EventShortTitle{ICALP 2025}
\EventAcronym{ICALP}
\EventYear{2025}
\EventDate{July 8--11, 2025}
\EventLocation{Aarhus, Denmark}
\EventLogo{}
\SeriesVolume{334}
\ArticleNo{122}

\nolinenumbers 

\begin{document}

\maketitle

\begin{abstract}
    We present a dependent randomized rounding scheme, which rounds fractional solutions to integral solutions satisfying certain hard constraints on the output while preserving Chernoff-like concentration properties.
    In contrast to previous dependent rounding schemes, our algorithm guarantees that the cost of the rounded integral solution does not exceed that of the fractional solution.
    Our algorithm works for a class of assignment problems with restrictions similar to those of prior works.

    In a non-trivial combination of our general result with a classical approach from Shmoys and Tardos [Math.\;Programm.'93] and more recent linear programming techniques developed for the restricted assignment variant by Bansal, Sviridenko [STOC'06] and Davies, Rothvoss, Zhang~[SODA'20], we derive a $O(\log n)$-approximation algorithm for the \emph{Budgeted Santa Claus Problem}.
    In this new variant, the goal is to allocate resources with different values to players, maximizing the minimum value a player receives, and satisfying a budget constraint on player-resource allocation costs.
\end{abstract}
\section{Introduction}
\label{sec:introduction}
A successful paradigm in the design of approximation algorithms
is to first solve a continuous relaxation, which can
typically be done efficiently using
linear programming, and then to round the 
fractional solution $x\in [0,1]^d$
to an integer solution $X\in\{0,1\}^d$. 
Careful choices need to be made in the rounding step
so that the error introduced is low.
Independent randomized rounding is one of the most natural
rounding schemes. In the simplest variant, we independently
set each variable $X_i$ to $1$ with probability $x_i$ and
to $0$ with probability~$1 - x_i$.
The advantage is that the value of every linear function
(over the $d$ variables) is maintained in expectation.
Moreover, for linear functions with small coefficients,
a Chernoff bound yields strong concentration guarantees
for the value.
Hence, if the initial solution $x$ satisfies some linear 
constraints from the continuous relaxation, we can often
argue with several Chernoff bounds combined by a union bound that 
they are still satisfied up to a small error by the
rounded solution $X$.

In some cases, however, the problem dictates structures or
hard constraints on the solution. For example, we might
require $X$ to be (the incidence vector of) a perfect matching
in a given graph or the basis of a given matroid.
Perfect matchings or bases are objects
that are quite simple in many computational aspects,
but it is typically very unlikely that independent
randomized rounding on a fractional object,
that is, a point in the convex hull of the objects we
want, results in one of these objects.
This motivates so-called dependent randomized rounding.
Here, the goal is to achieve similar guarantees
as independent randomized rounding, but with a distribution over a restricted set of objects,
which necessarily introduces some dependency between
variables.
These rounding schemes are typically tailored to specific
object structures and achieving comparable goals is already challenging for very simple structures.

For bipartite perfect matchings, a fundamental structure in
combinatorial optimization, one cannot hope to achieve
similar concentration guarantees to independent randomized rounding, due to the following well-known example.
Given a cycle of length $n\in 2\N$,
there are only two perfect matchings, 
the two alternating sets of edges. If the values
of a linear function over the edges alternate between
$0$ and $1$, then the fractional matching, which takes every
edge with $1/2$ will have a function value of $n/2$, but
each of the integral matchings incurs an additive
distortion of $\Omega(n)$, much higher than the bound
of $O(\sqrt{n})$ that holds with high probability
if each edge is picked independently with probability $1/2$.
If one considers~$b$-matchings or other more general assignment problems, however, then there are non-trivial guarantees that can be achieved with dependent rounding.
Gandhi, Khuller, Parthasarathy, and Srinivasan~\cite{GandhiKPS06} 
show that between any two edges incident to the same vertex, they can establish \emph{negative correlation}. 
Furthermore, their algorithm has the natural property of \emph{marginal preservation},
which means that the probability of $X_e = 1$ is equal to the fractional value $x_e$ for each variable $X_e$.
Together this implies strong concentration guarantees at least for linear functions on the incident edges of each vertex.
The following proposition is a consequence of their result.
\begin{proposition}\label{prop:assign}
    Let $G = (A\cup B, E)$ be a bipartite graph and
    $x\in [0, 1]^{E}$ represent a fractional many-to-many assignment.
    Furthermore, let $c\in\RR^{E}$, and $a_1,\dotsc,a_k\in [0, 1]^E$ such
    that for each $j\in \set{1,\dotsc,k}$ there is some $v\in A\cup B$
    with $\supp(a_j) \subseteq \delta(v)$.
    Then, in randomized polynomial time, one can compute
    $X\in \{0, 1\}^E$ satisfying with constant probability
    \begin{description}
        \item[Cost Approximation.] \hspace{0.13cm} $c\T X \le (1 + \epsilon) \cdot c\T x$
        \item[Concentration.] \hspace{1.03cm} $|a_j\T x - a_j\T X| \le
            O(\max\{\log k, \sqrt{a_j\T x \cdot \log k}\})$ for all $j\in \set{1,\dotsc,k}$
        \item[Degree Preservation.] \hspace{0.05cm}  $X(\delta(v)) \in \{\floor{x(\delta(v)}, \ceil{x(\delta(v))} \}$
    \end{description}
    Here, $\epsilon > 0$ is an arbitrarily small constant
    that influences the success probability.
\end{proposition}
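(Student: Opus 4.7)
The plan is to invoke the dependent-rounding algorithm of Gandhi, Khuller, Parthasarathy, and Srinivasan directly on~$x$ and read off the three claimed properties. Their algorithm for bipartite $b$-matchings produces a random integral $X \in \{0,1\}^E$ with three properties: (i) marginal preservation, $\Pr[X_e = 1] = x_e$ for every edge~$e$; (ii) deterministic degree preservation, $X(\delta(v)) \in \{\lfloor x(\delta(v)) \rfloor, \lceil x(\delta(v)) \rceil\}$ for every vertex~$v$; and (iii) negative correlation among indicators of edges incident to any common vertex. Property (ii) is exactly the Degree Preservation bullet, so only the Cost and Concentration bullets require further argument.

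For Concentration, fix~$j$ and let~$v$ be the vertex with $\supp(a_j) \subseteq \delta(v)$. Because $\{X_e : e \in \delta(v)\}$ are negatively correlated, I would apply the Panconesi--Srinivasan version of the multiplicative Chernoff bound, which generalizes the independent case to negatively correlated $\{0,1\}$-variables summed with weights in $[0,1]$. Choosing the deviation threshold as a sufficiently large constant multiple of $\max\{\log k, \sqrt{a_j\T x \cdot \log k}\}$ makes the failure probability for a fixed~$j$ at most $k^{-c}$ for some $c > 1$, and a union bound over $j$ yields all $k$ concentration inequalities simultaneously with probability $1 - o(1)$. For Cost, marginal preservation makes $c\T X$ a random variable of mean $c\T x$; assuming $c \ge 0$ (the natural case for a cost vector), Markov's inequality gives $\Pr[c\T X \le (1+\epsilon) c\T x] \ge \epsilon/(1+\epsilon)$, a positive constant depending only on~$\epsilon$. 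A final union bound over the two complementary events yields the desired joint success probability $\Omega(\epsilon)$.

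The main obstacle is precisely this reliance on Markov for the cost. Unlike each per-vertex $a_j\T X$, the global $c\T X$ need not concentrate, since $c$ may have entries of arbitrary magnitude, so Markov is essentially the best available tail estimate. Consequently, the success probability degrades as $\epsilon \to 0$, which is the $\epsilon$-dependence explicitly noted in the statement; if a higher probability is desired one can simply repeat the rounding $O(1/\epsilon)$ times and keep the best outcome. A secondary care-point is verifying that the $a_j$ having real (not binary) coefficients in $[0,1]$ is still compatible with the Panconesi--Srinivasan framework, but this is standard: the negative correlation of the indicators lifts to the required moment-generating-function bound for weighted sums.
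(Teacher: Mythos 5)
Your derivation is exactly the one the paper intends: it states \Cref{prop:assign} as a consequence of the GKPS rounding scheme, with degree preservation built in, concentration from negative correlation via a Chernoff--Hoeffding bound for weighted sums, and the $(1+\epsilon)$ cost bound from marginal preservation plus Markov (which is also why the paper's later remark implicitly restricts this proposition to nonnegative costs, consistent with your $c \ge 0$ assumption). No gaps worth flagging.
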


A generalization to matroid intersection with a similar
restriction was shown by Chekuri, Vondrack, and Zenklusen~\cite{ChekuriVZ10}.
The same work also presents a dependent rounding
scheme for a single matroid that outputs a basis
satisfying similar concentration bounds on linear functions without a restriction on the support.
In this study, we ask the following question:
\begin{center}
    \medskip
    \emph{Can we avoid an error in the cost for dependent rounding while maintaining comparable other guarantees?}
    \medskip
\end{center}
We call a rounding algorithm \emph{cost preserving} if it does not
exceed the cost of the fractional solution we start with.
Here, we focus on the stronger variant where distributions are only over objects that are cost preserving, although one might be satisfied with a sufficiently high probability of cost preservation in some
cases. We have no evidence that such a relaxation would make the task significantly easier.

There are several situations where even the seemingly
small cost approximation of $(1 + \epsilon)$, as derived from
marginal preservation and Markov's inequality in the previously mentioned result, is unacceptable. For example, the cost of the fractional solution might come from a hard budget constraint $c\T x \le C$ in the problem.
Another situation is an extension of the objective function to
potentially negative values, representing for example the task of
maximizing profit = revenue - cost. Here, Markov's inequality
cannot be applied at all. Finally, an algorithm that preserves the cost
provides polyhedral insights: every fractional object
is in the convex hull of integer objects that marginally deviate
 in the considered linear functions. And similarly, the (non-integral) polytope of a relaxation is
contained in an approximate integral polytope. 
It is easy to see that cost preservation is incompatible with marginal preservation and hence cannot be satisfied by the dependent rounding schemes above: consider $d+1$ variables~$x_0, x_1, \dotsc, x_d$
of which exactly one is selected, then this can be modeled by bases of a uniform matroid or a degree constraint in the assignment problem above.
Suppose that $c_1 = \cdots = c_{d} = 1/(1 - 2^{-d}) > 1$ and $c_0 = 0$ where the fractional solution is given by~$x_1 = \cdots = x_n = (1 - 2^{-d})/d$
and $x_0 = 2^{-d}$, leading to a cost of~$1$. For a marginal preserving distribution, the probability that the integral solution $X$ has a cost lower than~$1$ (i.e.,~$X_0 = 1$) is exponentially small.
Note, however, that this is not an immediate counter-example to our stated goal:
in this example, deterministically choosing $X_0 = 1$ (and~$X_1 = \cdots = X_d = 0$) still maintains
$|a\T x - a\T X| \le 1$ for every $a\in [0, 1]^{d+1}$. 
\paragraph*{Our contributions}
Our results are twofold.
First, we show that one can obtain comparable guarantees to \Cref{prop:assign} while preserving costs. 
\begin{restatable}{theorem}{RDR}
    \label{thm:assign}
    Let $G = (A\cup B, E)$ be a bipartite graph and
    $x\in [0, 1]^{E}$ represent a many-to-many assignment.
    Furthermore, let $c\in\R^{E}$ and $a_1,\dotsc,a_k\in [0, 1]^E$ such
    that for each $j \in \set{1,\dotsc,k}$ there is some $v\in A\cup B$
    with $\supp(a_j) \subseteq \delta(v)$.
    Then, in randomized polynomial time, one can compute
    $X\in \{0, 1\}^E$ satisfying with constant probability
    \begin{description}
        \item[Cost Preservation.] \hspace{0.46cm} $c\T X \le c\T x$,
        \item[Concentration.] \hspace{1.03cm} $|a_j\T x - a_j\T X| \le
            O(\max\{\log k, \sqrt{a_j\T x \cdot \log k}\})$ for all $j\in \set{1,\dotsc,k}$,
        \item[Degree Preservation.] \hspace{0.05cm} $\sss{X}{\delta(v)} \in \{\floor{\, \sss{x}{\delta(v)} \,}, \ceil{\, \sss{x}{\delta(v)} \,} \}$.
    \end{description}
\end{restatable}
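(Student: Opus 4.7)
The plan is to build on the iterative pivoting scheme of Gandhi, Khuller, Parthasarathy, and Srinivasan (GKPS) that underlies \Cref{prop:assign}, modified so that each pivot deterministically does not increase $c\T x$, rather than preserving it only in expectation. Recall that the GKPS-style algorithm iteratively identifies either a cycle $C$ or a maximal path $P$ in the fractional support $\{e : 0 < x_e < 1\}$ and performs a \emph{pivot}: along the two alternating edge classes $E_1, E_2$ of $C$ or $P$, it shifts $x$ by $+\theta$ on $E_1$ and $-\theta$ on $E_2$, where $\theta$ ranges in $[-\theta^-, \theta^+]$ until a variable reaches $\{0, 1\}$ or a vertex degree becomes integer. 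Degree preservation is immediate from this structure, so the third bullet comes for free.

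For cost control at a single pivot, let $\Delta c = c(E_1) - c(E_2)$. If $\Delta c = 0$, both extreme choices of $\theta$ preserve cost, and I would sample via the standard GKPS probabilities $p = \theta^-/(\theta^+ + \theta^-)$, which preserve marginals and induce negative correlation between any two edges incident to the same vertex. If $\Delta c \neq 0$, exactly one of $\theta = \theta^+$ or $\theta = -\theta^-$ is strictly cost-decreasing; taking it deterministically guarantees $c\T X \le c\T x$ but breaks marginal preservation along that pivot. Hence the technical challenge reduces to arranging the deterministic pivots so that they do not spoil concentration.

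Fix $a_j$ with $\supp(a_j)\subseteq \delta(v)$; then only pivots through $v$ affect $a_j\T X$, and each such pivot contributes $\theta \cdot (a_{j,e_1} - a_{j,e_2})\in[-|\theta|, |\theta|]$ where $e_1, e_2$ are the two pivot edges incident to $v$. The randomized (balanced) pivots form a bounded-difference martingale for $a_j\T X$, and Azuma--Hoeffding yields the desired $O(\sqrt{a_j\T x\cdot \log k})$ tail via a standard union bound over the $k$ functions. To control the deterministic pivots I would preprocess $x$ by a min-cost-flow-style subroutine on the support graph that routes cost imbalance away from cycles and into pivots on paths between degree-fractional vertices, where the rounding is ``absorbed'' by degree rounding at the path endpoints rather than propagating through arbitrary vertices. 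After preprocessing, every remaining cycle is cost-balanced, so the main iterative phase uses only GKPS pivots and the concentration analysis transfers verbatim.

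The main obstacle is precisely the preprocessing step: showing that cost imbalance can be shifted onto tight-path pivots in polynomial time, without violating the box constraints $x_e\in[0,1]$ or the degree-preservation invariant, and in such a way that any fixed vertex $v$ is touched by at most $O(\log k)$ deterministic pivots, so that the deterministic contribution to $|a_j\T X - a_j\T x|$ is absorbed into the additive $O(\log k)$ error. I expect this to require a careful combinatorial charging argument, possibly exploiting that each deterministic pivot produces a measurable cost decrease that can be tied to an entropy-style potential on the support graph. Once this preprocessing is in place, combining it with the modified pivot rule yields cost preservation deterministically, degree preservation structurally, and Chernoff-style concentration up to an additive $O(\log k)$ overhead, matching the bounds claimed in the statement.
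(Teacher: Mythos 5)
There is a genuine gap, and you have in fact pointed at it yourself: everything hinges on the preprocessing step that is supposed to confine deterministic (cost-greedy) pivots to paths and to at most $O(\log k)$ of them per vertex, and that step is neither proved nor plausible. Consider an instance in which $x(\delta(v))$ is integral for every vertex, so the fractional support decomposes into cycles only. Then there are no degree-fractional path endpoints to ``absorb'' cost imbalance, yet nothing prevents many cycles through a fixed vertex $v$ from being cost-unbalanced. Your rule must then round each such cycle deterministically in the cheaper direction, and an adversary can correlate cost and $a_j$: give the edges of $\delta(v)$ coefficient $a_{j,e}=1$ exactly when they lie on the cheaper alternating class and $0$ otherwise. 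Degree preservation does not help here, because the drift comes from moving mass \emph{within} $\delta(v)$ from $a_j$-coefficient-$0$ edges to $a_j$-coefficient-$1$ edges; over $\Omega(\deg(v))$ deterministic pivots this produces an error of order $\deg(v)$, far beyond $O(\max\{\log k,\sqrt{a_j\T x\log k}\})$. Even when fractional-degree vertices exist, there is no argument that a fixed vertex meets only $O(\log k)$ deterministic pivots; the ``entropy-style potential'' and charging argument are exactly the missing content, not a routine verification. So the greedy, pivot-by-pivot mechanism for cost preservation is the wrong mechanism, not merely an incomplete one.

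The paper resolves the tension differently, and the difference is the key idea you are missing: cost is never controlled locally per pivot, but globally by a \emph{complementary-pair} trick. One first writes (a slight perturbation of) $x$ as a convex combination of integral degree-preserving edge sets with discretized scalars (multiples of $2^{-\ell}$, obtained via a min-cost circulation that already preserves cost), and then merges these sets pairwise along a binary tree with $\ell=O(\log k)$ levels. At each merge of $S_1,S_2$ one decomposes $S_1\oplus S_2$ into cycles and paths and flips a fair coin per component, but records \emph{both} outcomes $T$ and $T'$ with $T+T'=S_1+S_2$; at each level the algorithm takes whichever of the two induced fractional solutions is cheaper, so the cost can only decrease, while concentration is unaffected because the Chernoff bound holds for $T$ if and only if it holds for the mirror $T'$. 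The per-level deviation carries a factor $2^{-j/2}$, so summing the geometric series over the $O(\log k)$ levels gives the stated bound. Your martingale analysis of fair pivots and your observation that degree preservation is structural are fine, but without replacing the deterministic-pivot idea by something like this pairing-and-take-the-min mechanism (or a genuinely new argument for your preprocessing claim), the proposal does not yield \Cref{thm:assign}.
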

\medskip \noindent Note that in contrast to the previous result, we allow for negative components in the cost function~$c$.

Second, we present a non-trivial application of our theorem to an allocation problem we call the \emph{Budgeted Santa Claus Problem (with identical valuations)}.
Colloquially, it is often described as Santa Claus distributing gifts to children on Christmas.
Formally, there are $n$ resources~$\cR$ (gifts) to be distributed among $m$ players~$\cP$ (children). 
Each resource~$j$ has a specific value~$v_j \ge 0$.
Additionally, there is a total budget of $C \ge 0$, and assigning a resource~$j$ to a player $i$ incurs
a cost denoted by $c_{ij} \ge 0$. 
The goal is a distribution of resources among the players where the least happy player is as happy as possible and
the total cost does not exceed the budget $C$. 
Formally, we aim to find disjoint sets $R_i\subseteq \cR $, $i\in \cP$, maximizing $\min_{i\in\cP} \sum_{j \in R_i} v_{j}$
while ensuring that $\sum_{i\in\cP }^{}\sum_{j\in R_i}^{} c_{ij} \leq C$.
Note that not all resources need to be assigned. However, the variant, where all resources must be assigned can be shown to be 
not more difficult than our problem, see Appendix~\ref{subsec:appendix-santa-claus-with-all-resources-assigned}.

It is possible to consider an even more general variant
where each value $v_{ij}$ depends on both player $i$~and
resource~$j$, which we call the \emph{unrelated valuations}.
Mainly, we restrict ourselves to identical valuations because
the understanding of unrelated valuations in literature is
rather poor---even without considering costs.
In fact, much of the recent literature is focused on the so-called
restricted assignment case
of unrelated valuations (without costs),
where~$v_{ij}\in\{0, v_j\}$,
meaning each resource is either desired with a value of~$v_j$ or worthless to a player.
Among players who desire a particular resource,
its value is the same. Our budgeted variant generalizes the
restricted assignment case: observe that by setting costs~$c_{ij}\in\{0,1\}$ and $C = 0$, we can restrict
the set of players to which a resource can be assigned.
In a non-trivial framework, we apply our dependent rounding theorem
to obtain the following approximation guarantee.
\begin{restatable}{theorem}{SCapprox}
    There is a randomized polynomial time $O(\log n)$-approximation algorithm for the Budgeted Santa Claus problem.
\end{restatable}

\paragraph*{Other related work for dependent rounding}
Saha and Srinivasan~\cite{SahaS18} also provide a
dependent rounding scheme for allocation problems,
focusing on combinations of dependent
and iterative rounding.
Bansal and Nagarajan~\cite{BansalN16}
combine dependent rounding with techniques from discrepancy theory, known as the Lovett-Meka algorithm~\cite{LovettM15}. They
prove that one can round a fractional independent
set (or basis) of a matroid to an integral one,
while maintaining comparable concentration guarantees to both Lovett-Meka and Chernoff-type bounds.
We note that Bansal and Nagarajan also integrate
costs in their framework, but they make the
assumption that the costs are polynomially bounded,
which is inherently different from our setting (apart from the fact that they consider matroids).

Another well-known dependent rounding scheme is the maximum entropy
rounding developed by Asadpour, Goemans, M{\k{a}}dry, Gharan and Saberi~\cite{AsadpourGMGS17}.
This is used to sample
a spanning tree, i.e., a basis of a particular matroid,
while guaranteeing negative correlation properties and therefore Chernoff-type concentration.
This result led to the first improvement over the longstanding
approximation rate of $\Theta(\log n)$ for the asymmetric
traveling salesman problem (ATSP).
However, all algorithms above guarantee marginal preservation, which means they
cannot guarantee cost preservation.

At least superficially related to our work is the literature on multi-budgeted
independence systems~\cite{ChekuriVZ11, GrandoniZ10}.
Here, the goal is to find objects of certain structures, e.g.,
matchings or independent sets of matroids,
subject to several (potentially hard) packing constraints
of the form
$a\T x\le b$ for some $a\in \RR^n$, $b\in \RR$.
This can also be used to model cost preservation alike to our
results.
Chekuri, Vondrak, and Zenklusen~\cite{ChekuriVZ11}
and Gradoni and Zenklusen~\cite{GrandoniZ10}
show various positive results in a similar spirit to ours.
These results, however, are restricted to downward-closed structures where for a given solution, formed by a
set of elements, all subsets are valid solutions as well.
For example, Chekuri, Vondrak, and Zenklusen achieve
strong concentration results for randomized rounding on matchings, but this relies on dropping edges in long augmenting paths or cycles in order to reduce dependencies.
Gradoni and Zenklusen~\cite{GrandoniZ10} give a rounding algorithm for a constant number of hard budgets, but
this requires rounding down all components of a fractional solution.
Hence, these results are unable to handle instances like matroid basis constraints, perfect matching constraints, or degree preservation as in~\Cref{thm:assign}.

\paragraph*{Other related work for the Santa Claus problem}
Omitting the costs in the variant we study, the problem
becomes significantly easier and admits an EPTAS, see e.g.~\cite{JansenKV20}, which relies on techniques that contrast with the ones that are relevant to us.
As mentioned before, the problem with costs generalizes
the restricted assignment variant and therefore
inherits the approximation hardness of $2-\epsilon$ due to~\cite{BansalS06}.
Here and in the following, we use restricted assignment synonymous
with the variant without costs, but $v_{ij}\in \{0, v_j\}$.
Bansal and Srividenko~\cite{BansalS06} developed a randomized rounding algorithm for the restricted assignment.
Normally, this would lead
to a similar logarithmic approximation rate as ours
(for the problem without costs),
but they show that combining it
with the Lov\'asz Local Lemma yields an even better rate of
$O(\log\log m/ \log\log\log m)$.
Using similar techniques, the rate was improved to a constant by Feige~\cite{Feige08}.
Note that this randomized rounding uses intricate preprocessing
that violates the marginal preserving property and thus cannot even
maintain the cost of a solution in expectation.
Based on local search, there is also a combinatorial approach, 
see e.g.,~\cite{BamasLMRS24, AsadpourFS08, AnnamalaiKS17},
which yields a (better) constant approximation for restricted
assignment. However, it is not at all clear how costs could be
integrated in this framework.

Finally, a classical algorithm by Shmoys and Tardos~\cite{ShmoysT93}
gives an additive guarantee, where the rounded integral solution
is only worse by the maximum value $v_{\max} = \max_{ij} v_{ij}$.
Therefore, it even works in the unrelated case without increasing
the cost. Notably, they state this result for
the dual of minimizing the maximum value, namely the Generalized Assignment Problem.
The mentioned guarantee for Santa Claus is followed by a trivial
adaption, see \Cref{lem:rounding-small-items}.
Although very influential, this is the only technique we are
aware of which considers the problem with costs.
Unfortunately, this additive guarantee does not lead
to a multiplicative guarantee, since the optimum may be lower
than~$v_{\max}$. In fact, it is well known that the linear 
programming relaxation used in~\cite{ShmoysT93} has an unbounded
integrality gap even for restricted assignment~\cite{BansalS06}. Hence, one
cannot hope to improve this by a simple modification.
Nevertheless, this algorithm forms an important subprocedure in our result.

\paragraph*{Notation}

First, we introduce some necessary notation. Let $S,T \in \set{0,1}^E$ be edge sets in a bipartite graph $G = (A\cup B, E)$.
For all $T \subseteq S$, define $S(T) = \sum_{e \in T} S_e$. 
Let $P$ be the convex hull of degree preserving edge sets $S \in [0,1]^E$. Moreover, for any $v \in A\cup B$ define $\delta(v) = \{e \in E \mid v$ is incident to $e\}$. For the sake of simplicity, we use the shorthand notation $[q] = \set{1, \dotsc, q}$ for any $q \in \N$. Furthermore, for any vector $a \in [0, 1]^E$, the support of $a$ is denoted by $\supp(a) = \{e \in E \mid a_e \neq 0 \}$.
\section{Budgeted Dependent Rounding}
\label{sec:dependent-randomized-rounding}

This section will introduce a dependent randomized rounding procedure,
which produces an integral solution satisfying certain concentration guarantees, while preserving the cost and the degree of the fractional solution.
The formal properties are summarized in the following theorem.

\RDR*
\medskip

Throughout this section, the proofs of the technical lemmas are deferred to~\cref{subsec:appendix-ommitted-proofs-of-dependent-rounding}.
An oversimplified outline of our algorithm is as follows:
imagine~$x$ is
the average of two integral edge sets, then
the result can be shown by decomposing the
symmetric difference of both edge sets into cycles and paths.
We reduce to this case by starting with many edge sets and iteratively merging pairs of them in
a tree-like manner.

In order to find the initial integral edge sets, we compute a representation of~$x$ (or rather another similar assignment~$y'$) that is a convex
combination of degree preserving edge sets
such that its scalars satisfy
a certain level of discreteness.
Let~$P$ be the convex hull of degree preserving edge sets~$S \in \set{0,1}^E$, that is, those $S$ that satisfy for all $v\in A\cup B$
\begin{equation*}
    S(\delta(v))\in\{\floor{x(\delta(v))}, \ceil{x(\delta(v)}\}.
\end{equation*}
It can be shown that $x$ is contained in $P$ and,
in particular, $x$ is a convex combination of
degree preserving sets. 
In the following lemma, we show something even stronger: there
exists a fractional assignment~$y$ at least
as good as~$x$, which is the
convex combination of only few edge sets
and has few fractional
variables in the support of each constraint.

\begin{restatable}{lemma}{BoundedNumberOfFracVariables}
    \label{lem:bounded-number-of-frac-variables}
    There exists a convex combination $y = \sum_{i \in [k]} \lambda_i S_i$
    where $\lambda_i \in [0,1]$ and $\sum_{i \in [k]} \lambda_i = 1$ and $S_i \in P \cap \{0,1\}^E$ with
    \begin{align}
        c\T y &\le c\T x & \\
        a_j\T y &= a_j\T x &&\forall j \in \set{1,\dotsc,k} \\
        |\{e\in \delta(v) \mid y_e \notin \{0, 1\}\}| &\le 2k &&\forall v\in A\cup B & \label{eq:sos-bounded-support}
    \end{align}
\end{restatable}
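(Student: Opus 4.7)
The natural starting point is the linear program
\begin{equation*}
    \min c\T y \quad \text{s.t.} \quad y \in P, \quad a_j\T y = a_j\T x \text{ for all } j \in [k],
\end{equation*}
which is feasible (with witness $y = x$) and bounded (since $P$ is compact). I would take $y$ to be an optimal basic feasible solution $y^{\ast}$. By construction, $c\T y^{\ast} \le c\T x$ (optimality against the feasible $x$) and $a_j\T y^{\ast} = a_j\T x$ (equality constraints), giving the first two properties at once. The remaining work is to verify the per-vertex sparsity of the fractional support of $y^{\ast}$ and to exhibit a small convex decomposition of $y^{\ast}$ into integer members of $P$.

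For the sparsity bound, let $F = \{e : y^{\ast}_e \notin \{0,1\}\}$. I plan to use a basic-feasible-solution counting argument: at $y^{\ast}$, some $|E|$ tight linearly independent constraints hold, partitioned among (i) box constraints $y_e \in \{0,1\}$ tight on $E \setminus F$ (contributing rank $|E| - |F|$); (ii) tight degree constraints $y(\delta(v)) \in \{\lfloor x(\delta(v))\rfloor, \lceil x(\delta(v))\rceil\}$; and (iii) the $k$ equalities $a_j\T y = a_j\T x$. Hence the tight degree constraints contribute rank at least $|F| - k$. Two bipartite-specific observations then do the rest: the incidence matrix of a bipartite graph has rank $|V| - c$ (with $c$ the number of connected components), and a tight degree constraint at a vertex whose only incident fractional edge is $e$ would force $e$ to be integer---a contradiction with $e \in F$. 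Applying these in each connected component of the fractional subgraph bounds its cyclomatic complexity by the number of $a_j$'s it contains, and a peeling analysis of tree- versus cycle-contributions to a single vertex's degree produces the required $|F_v| \le 2k$.

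For the decomposition, total unimodularity of the bipartite incidence matrix makes $P$ an integer polytope, so every face of $P$ has integer extreme points. The BFS counting above shows that the tight $P$-constraints (box and degree) at $y^{\ast}$ have rank at least $|E| - k$, hence $y^{\ast}$ lies in the relative interior of a face of $P$ of dimension at most $k$. Applying Carath\'eodory's theorem within this face expresses $y^{\ast}$ as a convex combination of at most $k+1$ integer extreme points of the face, which are elements of $P \cap \{0,1\}^{E}$; this yields the representation $y = \sum_i \lambda_i S_i$ required by the lemma and can be produced in polynomial time via column generation restricted to the face. The main obstacle is proving the per-vertex bound: while the global bound $|F| \le |V| + k$ follows immediately from the BFS count, localizing it to each vertex crucially relies on the bipartite leaf-forces-integrality observation and a careful tree-plus-cycles decomposition of the fractional subgraph.
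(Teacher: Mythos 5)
Your plan coincides with the paper's proof in its outer structure: the same LP over $P$ with the $k$ equalities, an optimal vertex solution (which gives $c^{\top}y\le c^{\top}x$ and $a_j^{\top}y=a_j^{\top}x$ for free), the observation that this vertex lies on a face of $P$ of dimension at most $k$, and a Carath\'eodory decomposition into few integral points of $P$. Where you genuinely diverge is the per-vertex sparsity bound \eqref{eq:sos-bounded-support}: the paper first shows that two adjacent vertices of $P$ (endpoints of a $1$-dimensional face) differ in a single alternating cycle or path, writes $y$ as a vertex $S$ of the low-dimensional face plus a nonnegative combination of at most $k$ edge directions $T_i-S$, and reads off at most two fractional edges per vertex and per direction, hence $2k$. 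You instead propose a rank count of tight constraints on the fractional subgraph. That route can be made to work, but as sketched it does not reach the constant $2k$, and the step "a peeling analysis \dots produces $|F_v|\le 2k$" is exactly where it falls short.

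Concretely: fix a component $K$ of the fractional subgraph, let $k_K$ be the number of functions $a_j$ whose fractional support lies in $K$, let $\beta_K$ be its cyclomatic number, and let $L_K$ be its vertices whose degree constraint is \emph{not} tight. Your counting gives $\beta_K\le k_K$ when $L_K=\emptyset$ and $\beta_K+|L_K|\le k_K+1$ otherwise, and the handshake/peeling bound gives, for a vertex $v$ of fractional degree $d\ge 3$, $d\le 2\beta_K+(\text{number of leaves of }K)\le 2\beta_K+|L_K|$, since every leaf is non-tight. In the corner case $|L_K|=1$, $\beta_K=k_K$ (the unique non-tight vertex being the unique leaf) this only yields $d\le 2k_K+1$; your leaf observation does not exclude this configuration, so the argument as written proves $2k+1$, not $2k$. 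The missing ingredient is an additional parity-type fact, for instance: a component of the fractional subgraph cannot contain exactly one non-tight vertex. (Sum the tight degree equalities over the component with sign $+1$ on the $A$-side and $-1$ on the $B$-side; every fractional edge with both endpoints tight cancels, so the fractional degree of the lone non-tight vertex is an integer, forcing its constraint to be tight after all.) With $|L_K|=0$ or $|L_K|\ge 2$ both cases give $d\le 2k_K\le 2k$, completing your route; alternatively one can fall back on the paper's edge-structure argument. Note that $2k+O(1)$ would be harmless for how the lemma is used later (only $O(k)$ fractional edges per constraint support are needed), but it would not prove the lemma as stated, so the gap, while small, is real.
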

Considering~$y$ as a vertex solution of a linear program, the proof follows from analyzing the structure of polytope~$P$.
For our algorithm, however, the scalars~$\lambda_i$ are not discrete enough.
Hence, we use the following lemma to round~$y$ to a more discrete assignment~$y'$.

\begin{restatable}{lemma}{ScalarRounding}
    \label{lem:convex-represenation-scalar-rounding}
    Let $\ell\in\N_{\ge 0}$ 
    and $y = \sum_{i \in [k]} \lambda_i S_i$ where $\lambda_i \in [0,1]$, $\sum_{i \in [k]} \lambda_i = 1$, and $S_i \in \{0,1\}^E$.
    In polynomial time, we can compute $y' = \sum_{i \in [k]} \lambda'_i S_i$ where $\sum_{i \in [k]} \lambda'_i = 1$ and
    \begin{align}
        \lambda'_i &\in \tfrac{1}{2^{\ell}}\cdot \ZZ, &&  \forall i \in \set{1, \dotsc, k} \label{eq:integer-multiples}\\
        \lambda'_i &= \lambda_i, &&  \forall i \in \set{1, \dotsc, k}, \lambda_i\in\{0,1\} \\
        |y_e - y_e'| &\le k \cdot \tfrac{1}{2^\ell}, &&  \forall e \in E \label{eq:close-solution}\\
        \cost{y'} &\le \cost{y}. \label{eq:rounding-cost-preservation}
    \end{align}
\end{restatable}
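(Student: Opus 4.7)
The plan is an iterative pairwise rounding procedure. Call a scalar $\lambda_i$ \emph{dyadic} if $\lambda_i\in\tfrac{1}{2^\ell}\ZZ$. As long as at least two of the $\lambda_i$ are non-dyadic, I will rebalance mass between a pair of them so that at least one becomes dyadic, while keeping $\sum_i\lambda_i$ invariant and the total cost non-increasing. Since $\sum_i \lambda_i = 1$ is dyadic and dyadic scalars contribute dyadic mass, the total non-dyadic mass is always a multiple of $1/2^\ell$, which rules out the corner case of exactly one remaining non-dyadic scalar. Hence the loop terminates with every $\lambda_i$ dyadic, and since each step strictly decreases the number of non-dyadic scalars, it halts after at most $k$ iterations.

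Concretely, for the pairing step, fix non-dyadic $i\ne j$, write $r_i = \lambda_i 2^\ell - \lfloor \lambda_i 2^\ell\rfloor \in (0,1)$ and similarly $r_j$, and set $\alpha = \min(r_i,\,1-r_j)/2^\ell$, $\beta = \min(1-r_i,\,r_j)/2^\ell$. I consider two opposing moves: \emph{Move A} sends $(\lambda_i,\lambda_j)\mapsto (\lambda_i-\alpha,\lambda_j+\alpha)$, and \emph{Move B} sends $(\lambda_i,\lambda_j)\mapsto (\lambda_i+\beta,\lambda_j-\beta)$. Both preserve $\sum_i\lambda_i$, both stay within $[0,1]$ by the choice of $\alpha,\beta$, and each rounds at least one of $\lambda_i,\lambda_j$ to a dyadic value. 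Writing $c_i := c\T S_i$, the cost changes are $\alpha(c_j-c_i)$ and $\beta(c_i-c_j)$, which have opposite signs, so at least one is non-positive; I execute that move, which guarantees~\eqref{eq:rounding-cost-preservation}.

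The remaining properties fall out of the analysis. Scalars with $\lambda_i\in\set{0,1}$ are already dyadic and therefore never chosen by a pairing step, yielding $\lambda'_i = \lambda_i$. Each iteration changes $y_e = \sum_i\lambda_i(S_i)_e$ by at most $1/2^\ell$, since the transferred mass $\delta\in\{\alpha,\beta\}$ is bounded by $1/2^\ell$ and the contribution to $y_e$ is $\pm\delta\cdot((S_j)_e-(S_i)_e)$ with $(S_j)_e-(S_i)_e\in\{-1,0,1\}$; accumulating over at most $k$ iterations gives $|y_e-y'_e|\le k/2^\ell$, establishing~\eqref{eq:close-solution}, while~\eqref{eq:integer-multiples} is exactly the loop termination criterion. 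The main obstacle, modest as it is, is a feasibility check: one must verify that both Moves A and B are always simultaneously admissible with respect to the $[0,1]$ box constraints, so the non-positive-cost alternative is genuinely available; this follows because $\lambda_i,\lambda_j\in(0,1)$ for non-dyadic indices and the transfers $\alpha,\beta$ are capped at $1/2^\ell$. Together with the divisibility invariant keeping the loop from stalling on a single non-dyadic scalar, these are the only non-routine points in the argument.
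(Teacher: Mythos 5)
Your proposal is correct, but it proves the lemma by a genuinely different route than the paper. The paper reduces the statement to a minimum-cost circulation: it introduces a node $v_i$ per index, places the integer capacity interval $[\lfloor \lambda_i 2^\ell\rfloor, \lceil \lambda_i 2^\ell\rceil]$ on the arcs through $v_i$, gives arc $(s,v_i)$ cost $c\T S_i$, observes that the fractional circulation induced by $\lambda$ is feasible, and invokes integrality of the circulation polytope to obtain an integral min-cost circulation $\bar f$, setting $\lambda'_i = \bar f(s,v_i)/2^\ell$; this gives cost preservation globally and even the per-scalar bound $|\lambda_i-\lambda'_i|\le 1/2^\ell$. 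You instead give an elementary pipage-style local argument: pair two non-dyadic scalars, consider the two opposite transfers $\alpha$ and $\beta$ (each capped below $1/2^\ell$ and each making at least one scalar dyadic), and execute the one whose cost change, $\alpha(c_j-c_i)$ or $\beta(c_i-c_j)$, is non-positive. Your supporting checks are all sound: both moves respect the $[0,1]$ box because non-dyadic scalars lie strictly in $(0,1)$; the number of non-dyadic scalars strictly decreases, so there are at most $k-1$ iterations; the divisibility argument correctly rules out a single leftover non-dyadic scalar; dyadic scalars are never touched, which yields $\lambda'_i=\lambda_i$ for $\lambda_i\in\{0,1\}$ (and in fact for all dyadic $\lambda_i$). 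One fine point worth noting: in your process an individual $\lambda_i$ may drift by more than $1/2^\ell$ over several iterations, so you do not recover the paper's per-scalar bound; but your per-iteration accounting of the change in $y_e$ (at most $1/2^\ell$ per step over at most $k-1$ steps) suffices for the stated bound $|y_e-y'_e|\le k/2^\ell$, which is all the lemma and its later uses require. In exchange, your argument is self-contained and trivially polynomial (no min-cost-flow subroutine or appeal to integrality of circulation polytopes), while the paper's approach buys a one-shot global argument and the slightly stronger per-index deviation guarantee.
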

We prove this lemma by constructing a flow network and
the standard argument that integral capacities imply existence of an integral min-cost circulation.

Notably, this is the first time we incur a small error for the linear functions $a_j$ while the cost is preserved.
More precisely, we use the lemma with 
$\ell := 2\log(2k)$. From \cref{eq:close-solution} follows that for all~$e \in E$
\begin{equation}
    \label{eq:rdr-scalar-rounding-error}
    |y_e - y_e'| \le k \cdot \tfrac{1}{2^\ell} \le \tfrac{1}{2k}.
\end{equation} 
Therefore, the linear functions also slightly change.
Using \cref{eq:sos-bounded-support,eq:rdr-scalar-rounding-error}, it holds that for all~$j\in \set{1,\dotsc,k}$
\begin{equation}
    \label{eq:rdr-increase-in-linear-function}
    |a_j\T x - a_j\T y'| = |a_j\T y - a_j\T y'| = \sum_{e \in E} (a_j)_e | y_e' - y_e | \le 1.
\end{equation} 
Since $y'$ is a convex combination of (integral)
degree preserving sets in $P$, we have $y'\in P$. In other words, the scalar rounding in \cref{lem:convex-represenation-scalar-rounding} does in fact preserve the degree of~$y$. 

Next, we construct a complete binary tree~$\tree$ with levels~$0,1,\dotsc,\ell$, where each node will be labeled with an edge set.
When the algorithm finishes, the label of the root will be~$X \in \set{0,1}^E$ and satisfy the properties stated in \cref{thm:assign}.
In the following, we describe how the algorithm \textsc{TreeMerge} creates the labels on~$\tree$.
The lowest level~$\ell$ represents the fractional assignment~$y' = \sum_{i \in [k]} \lambda'_i S_i$ where~$S_i \in P\cap \set{0,1}^E$ are degree preserving edge sets.
As we can write~$\lambda'_i = h_i/2^{\ell}$ 
for some $h_i\in\ZZ$ and all $h_i$ sum to $2^{\ell}$, we can naturally label~$h_i$ leaves of level~$\ell$ with $S_i$ for all~$i \in \set{1, \dotsc, k}$.
Thus, $y'$ is the average of all labels of level~$\ell$.
For all $j \in \set{0, \dotsc, \ell-1}$, the labels of level~$j$ are derived from those in level~$j + 1$ such that each node's label is closely related to
those of its two children.
Similar to the last level, each level~$j$ represents a (fractional) edge set~$y_j'$ by taking the average of all labels in this level.

One of the central goals in the construction of $y'_j$ is to guarantee~$\cost y_{j}' \le \cost y_{j+1}'$.
We achieve this by creating two complementary
labelings of level $j$ and selecting the better
of the two.
Denote the labels of level~$j+1$ by~$S_{2i-1}, S_{2i}$ for all~$i\in\set{1, \dotsc, 2^{j}}$. 
Here, each pair~$S_{2i-1}, S_{2i}$ represents the children of the $i$-th node in level $j$. For node~$i$, we construct two potential labels~$T_{i}, T_{i}' \in \set{0,1}^E$ using a random procedure with the following guarantees.

\begin{restatable}{lemma}{EdgeSetDecomposition}
    \label{lem:decomposition}
    Let $S_1, S_2 \in \set{0,1}^E$.
    There exists a random polynomial time procedure that constructs two random edge sets $T, T' \in \set{0,1}^E$ with the following properties. 
    \begin{itemize}
        \item It holds that $T + T' = S_1 + S_2$ and $\E(T) = \E(T') = (S_1+S_2)/2$.
        \item For all $v \in A\cup B$ it holds that $T(\delta(v)), T'(\delta(v)) \in \set{\floor{\,(S_1(\delta(v)) + S_2(\delta(v)))/2 \,}, \ceil{\,(S_1(\delta(v)) + S_2(\delta(v)))/2 \,}}$.
        \item For all $v \in A\cup B$ and all $e \in \delta(v)$, there is at most one edge $e' \in \delta(v)\setminus\{e\}$ such that $T_e$ depends on $T_{e'}$. Likewise, there is at most one edge $e' \in \delta(v)\setminus\{e\}$ such that $T'_e$ depends on $T'_{e'}$. 
    \end{itemize}
\end{restatable}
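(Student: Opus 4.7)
My plan is to focus the randomization entirely on the symmetric difference $H = S_1 \triangle S_2$. On $S_1 \cap S_2$ we are forced to take $T_e = T'_e = 1$, and on $E \setminus (S_1 \cup S_2)$ we take $T_e = T'_e = 0$; meanwhile on $H$ the identity $T + T' = S_1 + S_2$ forces $T_e + T'_e = 1$, so $T'_e = 1 - T_e$. It therefore suffices to sample a random subset $R \subseteq H$ (playing the role of $T \cap H$) with $\mathbb{E}[\mathbf{1}_R(e)] = \tfrac{1}{2}$ for every edge, with $|R \cap \delta(v)| \in \{\lfloor d_H(v)/2 \rfloor, \lceil d_H(v)/2 \rceil\}$ at every vertex $v$, and only pairwise correlations inside each $\delta(v)$.

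The key combinatorial ingredient is a decomposition lemma that I would prove separately: $H$ can be partitioned into edge-disjoint \emph{simple} paths and \emph{simple} cycles such that each vertex $v$ is the endpoint of exactly one path if $d_H(v)$ is odd and of no path otherwise. I prove this by induction on $|E(H)|$. If $H$ contains a simple cycle (which can be found in polynomial time, e.g.\ as any shortest cycle), remove its edges: every vertex on the cycle loses exactly $2$ from its degree and every other vertex is unchanged, so parities are preserved. Otherwise $H$ is a forest, and in each tree with at least one edge I pick any two leaves and remove the unique tree path joining them; the two leaves are isolated afterwards and every interior vertex on the removed path loses exactly $2$ from its degree, so again all parities are preserved. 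Induction finishes the decomposition, and since each step is trivially polynomial, the whole construction runs in polynomial time.

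Given the decomposition, I flip an independent fair coin $X_\tau$ for every trail $\tau$ and use it to pick one of the two alternating two-colorings of $\tau$'s edges into ``red'' and ``blue''. I then set $T = (S_1 \cap S_2) \cup \{e \in H : e \text{ is red}\}$ and $T' = (S_1 + S_2) - T$. By construction $T + T' = S_1 + S_2$; since each $e \in H$ is red with probability exactly $1/2$ and the rest is deterministic, we immediately get $\mathbb{E}[T] = \mathbb{E}[T'] = (S_1 + S_2)/2$. For the degree condition, consider any vertex $v$: every trail in which $v$ is \emph{internal} contributes exactly one red and one blue edge at $v$, because the trail is simple and hence visits $v$ only once, so the two consecutive trail edges at $v$ necessarily receive opposite colors; the unique trail (if any) in which $v$ is an \emph{endpoint} contributes a single $H$-edge of random color. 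Summing these contributions with the deterministic $|S_1 \cap S_2 \cap \delta(v)|$ places $T(\delta(v))$ in the required set $\{\lfloor (S_1 + S_2)(\delta(v))/2 \rfloor, \lceil (S_1 + S_2)(\delta(v))/2 \rceil\}$, and symmetrically for $T'$.

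Finally, for the dependence property, two $H$-edges incident to the same $v$ are independent whenever they belong to different trails (their coins $X_\tau$ are independent), and only the shared coin couples them when they belong to the same trail. Because each trail is simple it visits $v$ at most once and therefore contains at most two edges of $\delta(v)$, so each $T_e$ depends on at most one other $T_{e'}$ with $e' \in \delta(v) \setminus \{e\}$; deterministic edges from $S_1 \cap S_2$ introduce no dependence at all. The main obstacle here, and the reason the construction needs care, is precisely the decomposition lemma: a naive Eulerian-trail decomposition of $H$ is tempting but fatal, because an Eulerian trail may revisit $v$ many times and a single coin would then couple many of its $\delta(v)$-edges together, destroying the ``at most one dependent partner'' condition. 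Insisting on simple cycles and simple leaf-to-leaf tree paths in the inductive step is what simultaneously delivers both simplicity of trails and the at-most-one-endpoint-per-vertex property that the concentration and dependence bounds require.
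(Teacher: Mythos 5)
Your proposal is correct and follows essentially the same route as the paper: fix the edges of $S_1\cap S_2$ deterministically, decompose $S_1\triangle S_2$ into simple paths and cycles so that each odd-degree vertex is the endpoint of exactly one path, alternately two-color each component, and flip one independent fair coin per component to split the colors between $T$ and $T'$. Your more explicit inductive proof of the decomposition (remove simple cycles, then leaf-to-leaf paths in the forest) is a fine elaboration of the paper's one-line argument; just note that the existence of an alternating two-coloring of a cycle uses that all cycles are even, which holds here because $G$ is bipartite, and that the two leaves' parities do flip upon removal -- exactly compensated by their being recorded as path endpoints.
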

\medskip
\noindent The lemma can be derived in two steps. First, decompose the symmetric difference of~$S_1$ and~$S_2$ into cycles and paths.
Second, for each of cycle and path, randomly select one of the alternating edge sets for~$T$ and the other for~$T'$.
This random process is similar to other dependent rounding approaches, e.g.~\cite{ChekuriVZ10,GandhiKPS06}, except that we also store~$T'$ that contains the ``opposite'' to every decision in $T$.

We create one fractional assignment from the random edge sets~$T_{i}$, $i\in\{1,\dotsc,2^j\}$,
and one from $T'_{i}$, $i\in\{1,\dotsc,2^j\}$, and pick the lower cost assignment for~$y_{j}'$.
Formally, let 
\begin{equation*}
    z_{j} = \tfrac{1}{2^{j}} \sum_{i \in [2^j]} T_{i} \qquad \text{ and } \qquad
    z'_{j} = \tfrac{1}{2^{j}} \sum_{i \in [2^j]} T'_{i} 
\end{equation*}
From the fact that~$T_{i} + T'_{i} = S_{2i-1} + S_{2i}$, it immediately follows that $(z_{j} + z'_{j})/2 = y_{j+1}'$.
If~$\cost z_j \le \cost z'_j$, set~$y_{j}' = z_{j}$. Otherwise,~$y_{j}' = z'_{j}$.
Consequently, we have that
\begin{equation}
    \label{eq:rdr-cost-of-level-j-assignment}
    \cost y_{j}' \le (\cost z_{j} + \cost z'_{j})/2 = \cost y_{j+1}'.
\end{equation}
We determine the labels of level~$j$ by picking either $T_i$ (if $z_j$ was chosen) or $T'_i$ (if $z'_j$ was chosen).
Repeating the procedure for all~$j \in \set{\ell-1,\dotsc,0}$
results in a label for the root node that is identical to $y_0'$. We conclude by setting $X = y_0'$.
As a last step before proving the main theorem, we 
bound how much the linear functions $a_j$ can change in each level.

\begin{restatable}{lemma}{IncreaseOfLinearFunction}
    \label{lem:linear-function-increase-per-level}
    Let $v\in A\cup B$ and $a \in [0, 1]^E$ with $\supp(a) \subseteq \delta(v)$.
    Let $y_{j+1}' \in [0,1]^E$
    be the fractional solution of the $(j+1)$-th
    level of \textsc{TreeMerge}
    and $y'_{j}\in [0,1]^E$
    that of the $j$-th level.
    Let $t= 132 \ln k$.
    Then with probability at least $1 - 1/k^{10}$, it holds that
    \begin{equation}
        \label{eq:lfipl-bound-on-lf-increase}
        |a\T y_j' - a\T y_{j+1}'| \le 2^{-j/2} \left(t + \sqrt{a\T y_{j+1}' \cdot t}\,\right).
    \end{equation}
\end{restatable}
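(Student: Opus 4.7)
The plan is to reduce the claim to a Bernstein-type concentration bound on a sum of independent, bounded random variables, exploiting two kinds of independence in the construction of level~$j$ from level~$j+1$.

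First, I would exploit the identity $z_j + z'_j = 2\,y'_{j+1}$, which follows from $T_i + T'_i = S_{2i-1} + S_{2i}$ in \Cref{lem:decomposition}, to observe that
\begin{equation*}
    a\T z_j - a\T y'_{j+1} \;=\; -\bigl(a\T z'_j - a\T y'_{j+1}\bigr).
\end{equation*}
Consequently, $|a\T y'_j - a\T y'_{j+1}|$ takes the same value regardless of whether the algorithm selects $y'_j = z_j$ or $y'_j = z'_j$, so it suffices to bound the deviation on the $z_j$ side.

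Next, I would write $2^j\,a\T z_j = \sum_{i=1}^{2^j} a\T T_i$ and open up each $a\T T_i$ according to the path/cycle decomposition of $S_{2i-1}\triangle S_{2i}$ underlying \Cref{lem:decomposition}. Because $\supp(a)\subseteq\delta(v)$ and each simple path or cycle visits $v$ at most once, each path/cycle contributes at most two edges of $\delta(v)$ to $a\T T_i$. Within a given path or cycle a single random bit selects which of the two alternating edge sets is sent to $T_i$; across distinct paths/cycles and across distinct indices~$i$ (where each pair of children uses fresh randomness) these bits are independent. This lets me write
\begin{equation*}
    2^j\, a\T z_j \;=\; \sum_{i,g} W_{i,g},
\end{equation*}
where $g$ ranges over the paths and cycles for index~$i$, the $W_{i,g}$ are mutually independent, and $0 \le W_{i,g} \le 2$ (since each $a_e \le 1$ and at most two edges of $\delta(v)$ lie in a given path/cycle).

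Finally, I would apply Bernstein's inequality to this sum. With $\mu := \E\bigl[\sum_{i,g} W_{i,g}\bigr] = 2^j\, a\T y'_{j+1}$, the bound $W_{i,g}\le 2$ yields a variance bound $\sigma^2 \le 2\mu$. Setting $\lambda := 2^{j/2}\bigl(t + \sqrt{a\T y'_{j+1}\cdot t}\bigr)$, Bernstein gives a tail probability of at most $2\exp\!\bigl(-\lambda^2/(4\mu + \tfrac{4}{3}\lambda)\bigr)$. Using $(r+s)^2 \ge r^2 + s^2$ for $r,s \ge 0$ one gets $\lambda^2 \ge 2^j\bigl(t^2 + a\T y'_{j+1}\cdot t\bigr)$, hence $\lambda^2/\mu \ge t$ and $\lambda \ge 2^{j/2} t \ge t$. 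Combined, these estimates yield $\lambda^2/(4\mu + \tfrac{4}{3}\lambda) \ge t/8$, and with $t = 132\ln k$ this comfortably exceeds $16\ln k$, so the failure probability is at most $2/k^{16} \le 1/k^{10}$. Dividing through by $2^j$ recovers the claimed bound on $|a\T z_j - a\T y'_{j+1}|$, which by the first step also bounds $|a\T y'_j - a\T y'_{j+1}|$. The main obstacle is the second step: the careful bookkeeping to verify that within each $T_i$ the edges of $\delta(v)$ split into independent groups of size at most two, each contributing an $O(1)$-bounded random variable to $a\T T_i$. This rests on a precise reading of the path/cycle decomposition underlying \Cref{lem:decomposition}; once that structure is in place, the remainder is a fairly direct Bernstein computation, with the constant $132$ in $t$ chosen generously to absorb the Bernstein constants.
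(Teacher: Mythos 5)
Your proposal is correct, and it follows the same overall strategy as the paper (use the symmetry between $T_i$ and $T'_i$ to make the cost-based choice of $z_j$ versus $z'_j$ irrelevant, rescale by $2^j$, exploit the limited dependencies inside $\delta(v)$ together with independence across pairs $i$, and finish with a concentration bound), but the technical execution differs in two ways. First, where the paper uses only the stated guarantee of \cref{lem:decomposition} (each edge of $\delta(v)$ depends on at most one other) to split $\delta(v)$ into two families of mutually independent variables and then applies a two-case multiplicative Chernoff bound to each half, you instead open up the construction behind \cref{lem:decomposition} and group the contribution to $a\T T_i$ by the individual paths/cycles, obtaining mutually independent block variables $W_{i,g}\in[0,2]$ (in fact $[0,1]$, since the two $\delta(v)$-edges of a path/cycle land in opposite alternating classes) and applying a single Bernstein inequality; your constant bookkeeping ($\lambda^2/(4\mu+\tfrac43\lambda)\ge 3t/16$ with $t=132\ln k$) checks out. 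This buys a cleaner, one-shot tail bound and sidesteps the paper's somewhat delicate partition of $\delta(v)$ into $P_1,P_2$, at the price of relying on the internal randomness structure (one fresh bit per path/cycle) rather than only on the lemma's stated interface. Second, a minor bookkeeping slip: edges on which $S_{2i-1}$ and $S_{2i}$ agree are not covered by any path/cycle, so the identity $2^j a\T z_j=\sum_{i,g}W_{i,g}$ should read $2^j a\T z_j=D+\sum_{i,g}W_{i,g}$ for a deterministic offset $D$, and correspondingly $\E[\sum_{i,g}W_{i,g}]=2^j a\T y'_{j+1}-D\le 2^j a\T y'_{j+1}$; this is harmless, since the deviation and the variance bound $\sigma^2\le 2\E[\sum_{i,g}W_{i,g}]\le 2\cdot 2^j a\T y'_{j+1}$ are unaffected and using the larger quantity in the Bernstein denominator is only conservative, but it should be stated explicitly.
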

This lemma follows from a standard Chernoff bound.
We are now in the position to prove \cref{thm:assign} using the lemmas above.

\begin{proof}[Proof of \cref{thm:assign}]
    Let~$\ell = 2 \log (2k)$.
    As explained throughout \cref{sec:dependent-randomized-rounding}, we use \cref{lem:bounded-number-of-frac-variables,lem:convex-represenation-scalar-rounding} to obtain a fractional degree preserving assignment~$y'\in P$ with
    $\cost y'\le \cost x$.
    Note that the rounding of~$x$ to~$y'$ marginally changes the linear function values, but we are able to maintain~$|a_j\T y' - a_j\T x|\le 1$, see \cref{eq:rdr-increase-in-linear-function}.
    Afterwards, we use the algorithm \textsc{TreeMerge} to construct a complete binary tree~$\tree$ with~$\ell + 1$ levels
    and corresponding fractional solutions~$y'_{\ell+1} = y', y'_{\ell}, \dotsc, y'_0 = X$.
    It remains to show that~$X$ satisfies all three properties from the theorem.
    By construction, more precisely \cref{eq:rdr-cost-of-level-j-assignment}, it
    holds that
    \begin{equation*}
        \cost X = \cost y_0' \le \cdots \le y'_{\ell+1} = \cost y' \le \cost x.
    \end{equation*}
    Thus, the cost is preserved.
    Next, we will show that the rounding of~$x$ to~$X$ also preserves the degree.
    Due to \cref{lem:convex-hull-of-edge-preserving-edge-sets,lem:bounded-number-of-frac-variables,lem:convex-represenation-scalar-rounding},
    we have that $y' = \sum_{i \in [k]} \lambda'_i S_i$, where $S_i\in P\cap \{0,1\}^E$
    form the labels of level $\ell$ of $\tree$. For all $i\in\{1,\dotsc,k\}$,
    the fact that $S_i\in P$ implies
    \begin{align}
        \label{eq:rdr-degree-preservation}
        \sss{S_i}{\delta(v)} \in \{ \floor{\,\sss{x}{\delta(v)}\,}, \ceil{\, \sss{x}{\delta(v)}\,} \}.
    \end{align}
    By induction over the tree $\tree$, we show that all labels and, in particular, $X$ are indeed degree preserving. \Cref{eq:rdr-degree-preservation} proves the base case.
    Let $S_1$, $S_2$ be the labels for two children of some node in $\tree$
    and $T, T'$ be the two potential labels for the said node (derived using \cref{lem:decomposition}).
    From the third property of \cref{lem:decomposition} directly follows that for all~$v \in A \cup B$ 
    \begin{equation}
        \label{eq:rdr-degree-preservation-of-edge-decomposition}
        T(\delta(v)), T'(\delta(v))
        \in \big\{\floor{ \,\tfrac{1}{2}  \sss{S_{1}}{\delta(v)} + \tfrac{1}{2}\sss{S_{2}}{\delta(v)} \,}, \ceil{\,\tfrac{1}{2} \sss{S_{1}}{\delta(v)} + \tfrac{1}{2}\sss{S_{2}}{\delta(v)}\,} \big\}.
    \end{equation}
    By induction hypothesis, $S_1$ and $S_2$ are degree preserving, so
    \begin{align*}
        &\floor{ \,\tfrac{1}{2}  \sss{S_{1}}{\delta(v)} + \tfrac{1}{2}\sss{S_{2}}{\delta(v)} \,} \ge \floor{ x(\delta(v)) } \text{ and similarly} \\
        &\ceil{ \,\tfrac{1}{2}  \sss{S_{1}}{\delta(v)} + \tfrac{1}{2}\sss{S_{2}}{\delta(v)} \,} \le \ceil{ x(\delta(v)) } .
    \end{align*}
    This concludes the induction step.
    
    It remains to prove the concentration, i.e., that $X$ marginally deviates from~$x$ in each of the
    given linear functions $a_j$.
    We apply \cref{lem:linear-function-increase-per-level} together with a union bound over all $k$ linear functions and all $\ell = 2\log(2k) \le k$ levels of $\tree$.
    Let $t = 30\log k$.
    As a consequence, with probability at least~$1 - 1/k^8$, it holds for all levels~$i \in \set{1, \dotsc , \ell}$ and linear functions~$a_j,j\in\set{1,\dotsc,k}$ that
    \begin{equation}\label{eq:change-a}
        |a_i\T y_{i+1}' - a_i\T y_i'| \le 2^{-i/2} \Big(t + \sqrt{a\T y_{i+1}' \cdot t}\,\Big).
    \end{equation}
    For some universal constant $d$, we prove that for all $j\in\{1,\dotsc,k\}$ and all~$i\in\{0,\dotsc,\ell\}$
    \begin{equation}\label{eq:grow-a}
        a_j\T y'_i \le d (1 + t + a\T x).
    \end{equation}
    Let us first argue that this in fact implies the last part of the theorem.
    Using triangle inequality and geometric series, it holds that
    \begin{align*}
        |a_j\T X - a_j\T x| 
        = |a_j\T y'_{\ell+1} - a_j\T y'_{\ell}| 
        &\le \sum_{i \in \set{0, \dotsc, \ell-1}} |a_j\T y'_{i+1} - a_j\T y'_{i}| \\
        &\le \sum_{i \in \set{0, \dotsc, \ell-1}} \frac{1}{(\sqrt{2})^i} \Big(t + \sqrt{a\T y'_{i+1} \cdot t}\,\Big) \\
        &\le \sum_{i \in \set{0, \dotsc, \ell-1}} \frac{1}{(\sqrt{2})^i} \Big(t + \sqrt{d(1 + t + a\T x) \cdot t}\,\Big) \\
        &= O(\max\{\log k, \sqrt{a\T x \cdot \log k}\}) .
    \end{align*}
    Finally, we prove \cref{eq:grow-a}.
    Let $j\in\{1,\dotsc,k\}$
    and $i\in\{0,\dotsc,\ell-1\}$.
    Let $i' \ge i$ be the minimal index such that
    $a_j\T y'_{i'} \le 1 + t + a_j\T x$.
    As $a_j\T y'_{\ell} = a_j\T y' \le 1 + a_j\T x$, such~$i'$ must exist.
    If $i' = i$, we are done. If $i' \neq i$, then it follows from \cref{eq:change-a} that
    \begin{equation*}
        a\T_j y'_{i'-1} \le (1 + t + a_j\T x) + 2^{-(i'-1)/2}\big(t + (1 + t + a_j\T x)\big) \le 3(1 + t + a_j\T x) \ .
    \end{equation*}
    For all $i'' \in\{i'-1,\dotsc,i\}$, we have $a_j\T y'_{i''} > t$. Thus, the same equation implies
    \begin{equation*}
        a\T_j y'_{i''-1} \le a\T_j y'_{i''} + \frac{2}{2^{(i''-1)/2}} a\T_j y'_{i''} = \left(1 + \frac{1}{2^{(i''-3)/2}}\right) a\T_jy'_{i''} \ .
    \end{equation*}
    Using the inequality $1 + z \le \exp(z)$ for all $z\in\R$, we have
    \begin{align*}
        a\T_j y'_{i} 
        &\le a_j\T y'_{i' - 1} \cdot \prod_{i''=i'-1}^{i+1} \left(1 + \frac{1}{2^{(i''-3)/2}}\right) \\
        &\le a_j\T y'_{i' - 1} \cdot \exp\left(\sum_{i''=i'-1}^{i+1} \frac{1}{2^{(i''-3)/2}}\right) \\
        &\le 3 e^{O(1)} \cdot (1 + t + a_j\T x) .
    \end{align*}
    This shows \cref{eq:grow-a} and thereby concludes the proof.
\end{proof}

\subsection{Omitted Proofs of Dependent Rounding}
\label{subsec:appendix-ommitted-proofs-of-dependent-rounding}

In this section, we provide the proofs of the previously stated lemmas.
Recall that $P$ is the convex hull of integral degree preserving edge sets for $x\in [0,1]^E$.
Let the operator $\oplus$ denote the symmetric difference (i.e., XOR) of two edge sets.
We start by showing that~$P$ indeed contains $x$.

\begin{lemma}
    \label{lem:convex-hull-of-edge-preserving-edge-sets}
    Let $q = |E|$.
    There exists a convex representation $x = \sum_{i \in [q]} \lambda_i S_i \in P$ where $S_i \in P\cap\{0,1\}^E$ and $\lambda_i \in [0,1]$ and
    $\sum_{i \in [q]} \lambda_i = 1$ such that for all $v\in A\cup B$ and~$i \in \set{1,\dotsc,q}$ holds that $\sss{S_i}{\delta(v)} \in \{\floor{\,\sss{x}{\delta(v)}\,}, \ceil{ \,\sss{x}{\delta(v)} \,}\}$.
\end{lemma}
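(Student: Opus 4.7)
The plan is to realize $P$ as the feasible region of a linear system whose constraint matrix is totally unimodular and whose right-hand side is integral, observe that $x$ lies in this polytope, and then apply Carathéodory's theorem to extract a short convex combination of integral vertices.

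First I would define
\begin{equation*}
    Q = \bigl\{ S\in\RR^E : 0\le S_e\le 1\ \forall e\in E,\ \floor{x(\delta(v))}\le S(\delta(v))\le\ceil{x(\delta(v))}\ \forall v\in A\cup B \bigr\},
\end{equation*}
and note that $x\in Q$ trivially, since each $x_e$ lies in $[0,1]$ and each $x(\delta(v))$ lies between its floor and its ceiling.

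Next, I would show that every vertex of $Q$ is integral. The constraint matrix of $Q$ is obtained by stacking $\pm I$ (from the box constraints) together with two signed copies of the bipartite vertex--edge incidence matrix (from the degree constraints). The bipartite incidence matrix is the canonical totally unimodular matrix, and stacking an identity preserves total unimodularity, so the full constraint matrix is TU. Since every right-hand side is an integer, the Hoffman--Kruskal theorem yields that all vertices of $Q$ are integral. For any $S\in Q\cap\{0,1\}^E$ and $v\in A\cup B$, the integer $S(\delta(v))$ lies in the integer interval $[\floor{x(\delta(v))},\ceil{x(\delta(v))}]$, whose length is at most one, so $S(\delta(v))\in\{\floor{x(\delta(v))},\ceil{x(\delta(v))}\}$. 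Hence the integer points of $Q$ are precisely the integral degree-preserving edge sets $P\cap\{0,1\}^E$, and consequently $Q=P$.

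Finally, since $x\in P\subseteq\RR^E$, Carathéodory's theorem expresses $x$ as a convex combination of at most $|E|+1$ vertices of $P$, each of which is integral and degree-preserving; rewriting this as a sum over $[q]$ with $q=|E|$ (padding with or absorbing zero-weight copies as needed) yields the claimed representation.

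The only substantive ingredient here is the total unimodularity of the constraint matrix, which is classical for bipartite graphs; the rest of the argument is routine polyhedral bookkeeping, so I do not expect any real obstacle beyond stating the TU reduction carefully.
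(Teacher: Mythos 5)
Your proof is correct, and it reaches the same underlying fact as the paper by a slightly different route. The paper builds an auxiliary circulation network (source $s$, sink $t$, a return arc $(t,s)$, and capacity intervals $[\lfloor x(\delta(v))\rfloor,\lceil x(\delta(v))\rceil]$ on the vertex arcs), observes that $x$ induces a feasible fractional circulation, and invokes the integrality of the circulation polytope with integral capacities (Schrijver, Cor.~13.10b) to decompose it into integral circulations, each of which is a degree-preserving edge set. You instead stay in edge space: you define the degree-constrained box polytope $Q$, prove its constraint matrix is totally unimodular (bipartite incidence matrix plus signed identity blocks), apply Hoffman--Kruskal to get integrality, identify $Q=P$, and finish with Carath\'eodory. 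These are two standard faces of the same integrality phenomenon; your version avoids the auxiliary digraph and makes the identification $Q=P$ explicit, while the paper's circulation formulation has the mild advantage of directly yielding an algorithmic decomposition via flow techniques (the same construction is reused in Lemma~\ref{lem:convex-represenation-scalar-rounding}).

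One small caveat: Carath\'eodory only guarantees at most $|E|+1$ vertices, and your remark about ``absorbing zero-weight copies'' cannot reduce a representation that genuinely needs $|E|+1$ positive weights down to $q=|E|$ terms (a star with center degree constraint $x(\delta(v))=1/2$ and $x_e=1/(2|E|)$ on every edge needs all $|E|+1$ vertices $0,e_1,\dotsc,e_{|E|}$). This off-by-one is, however, an artifact of the lemma statement itself---the paper's proof never establishes any bound on the number of terms, and nothing downstream uses the exact count---so it is a cosmetic mismatch rather than a gap in your argument.
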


\begin{proof}
    We rely on a standard flow argument. 
    To this end, construct a digraph $D_f = (V_f , A_f)$ as follows. Let $V_f = \{s,t\} \cup V$ be the set of vertices, $A_s$ be a set of arcs directed from $s$ to each $a \in A$, $A_t$ be the set of arcs directed from each $b \in B$ to $t$, and $A_E$ be a directed variant of $E$ from $A$ to $B$. Let $A_f = A_s \cup A_E  \cup A_t \cup \{(t,s)\}$ be the set of arcs in $D_f$. Set the capacity interval for arc $(s,a) \in A_s$ as $[\lfloor x(\delta(a)) \rfloor, \lceil x(\delta(a)) \rceil]$ and similarly for each arc $(b,t) \in A_t$ as $[\lfloor x(\delta(b)) \rfloor, \lceil x(\delta(b)) \rceil]$. Moreover, set the capacity interval for each arc $e \in E$ to $[0,1]$ and for the arc $(t,s)$ to $[0,\infty]$. A function $f: A_f \rightarrow \R$ is called a circulation if $f(\delta^{\mathrm{in}}(v)) = f(\delta^{\mathrm{out}}(v))$ for each vertex $v \in V_f$. 
    
    One can naturally derive a feasible fractional circulation from the vector $x\in [0,1]^E$: the flow from $s$ to $a \in A$ is $x(\delta(a))$, which lies in $[\lfloor x(\delta(a)) \rfloor, \lceil x(\delta(a)) \rceil]$, the circulation from $b \in B$ to $t$ is $x(\delta(b))$, which lies in $[\lfloor x(\delta(b)) \rfloor, \lceil x(\delta(b)) \rceil]$, the circulation on each arc $(a,b) \in A_E$ is $x_{ab}$, and the circulation on arc $(t,s)$ is $x(E)$. Hence, $x$ satisfies the capacity and flow conservation constraints. 
    It is well known, see e.g.~\cite[Corollary 13.10b]{schrijver2003combinatorial}, that for integral capacities the set of all feasible circulations (including $x$) forms an integral polytope. Thus, we can rewrite the circulation above as a convex combination
    of integral circulations. Each of these integral circulations corresponds
    to a degree preserving edge set.
\end{proof}

Further, there always exists a comparable convex representation which has only few fractional variables in the support of each constraint.

\BoundedNumberOfFracVariables*

\begin{proof}
    First, we argue about the structure of the edges of polytope $P$.
    Let $S, T\in \{0,1\}^E$ be the two vertices at the two ends of
    some edge of $P$. We claim that $S\oplus T$ is a simple cycle
    or a simple path. Suppose not. If $S\oplus T$ is acyclic, let $D$
    be a maximal path in $S\oplus T$; otherwise let $D$ be a simple cycle
    contained in $S\oplus T$.
    Note that $S \oplus D \in P$ and $T \oplus D\in P$ and both
    points do not lie on the edge between $S$ and $T$.
    However, $(S + T)/2 = (T \oplus D + S \oplus D) / 2$ contradicts
    that $S$ and $T$ are connected by an edge.

    Let $y$ be an optimal vertex solution of the linear program
    \begin{align*}
        \min c\T& y \\
        a_j\T y &= a_j\T x &\forall j \in \set{1,\dotsc,k} \\
        y &\in P
    \end{align*}
    Due to~\cref{lem:convex-hull-of-edge-preserving-edge-sets}, the linear program is feasible.
    The solution $y$ must lie on a face~$F$ of~$P$ with dimension at most $k$.
    Consider an arbitrary vertex $S$ of $F$. Furthermore, let $T_1,\dotsc,T_h$, $h\le k$, be the vertices of $F$ such that there is an edge between each $T_i$ and $S$.
    Thus, we can write $y = S + \sum_{i \in [h]} \lambda_i (T_i - S)$ where $\lambda_1,\dotsc,\lambda_h \ge 0$.
    By our previous argument, $T_i \oplus S$ is a simple cycle or path for each $i\in \set{1,\dotsc,h}$.
    In particular, $|(T_i \oplus S) \cap \delta(v)| \le 2$ for each $v\in A\cup B$. This implies that the last property holds for $y$.
\end{proof}

Allowing a small rounding error, there always exists a convex representation where the scalars are integer multiples of a power of two.

\ScalarRounding*

\begin{proof}
    Using a standard flow argument,
    we construct a digraph $D_f = (V_f, A_f)$ that represents a circulation network. For each arc in $a \in A_f$, we adjust a capacity interval such that every feasible circulation corresponds to a solution that is ``close'' to $x$ and preserves the cost. Let $V_f =\{ s, t \} \cup \{ v_i : i\in [k]\}$ where nodes $v_i$ correspond to each $S_i$. The set of arcs~$A_f$ contains arcs $(t, s)$, $(s,v_i)$, and $(v_i,t)$ for $i \in [k]$. A function $f: A_f \rightarrow \R$ is called a circulation if $f(\delta^{\mathrm{in}}(v)) = f(\delta^{\mathrm{out}}(v))$ for each vertex $v \in V_f$. 
    Set the capacity interval of arc~$(t,s)$ to~$[2^\ell, 2^\ell]$ and
    the one of arcs~$(s,v_i)$ and~$(v_i, t)$ to $[\lfloor \lambda_i 2^\ell \rfloor, \lceil \lambda_i 2^\ell \rceil]$ for $i \in [k]$. 
    Moreover, define a linear cost function $\mathrm{cost}(s,v_i) = \sum_{e \in E} c_e (S_i)_e$, the contribution of $S_i$ to the total cost.
    Set $\mathrm{cost}(t,s) = \mathrm{cost}(v_i, t) = 0$.
    The scalars $\lambda_i$ induce a natural circulation $f$: on 
    arcs~$(s, v_i)$ and~$(v_i, t)$, we send a flow of $\lambda_i 2^\ell$
    and on arc~$(t, s)$, we send a flow of $2^\ell$.
    
    For integral capacities the set of feasible circulations forms an integral polytope, see e.g.~\cite[Corollary 13.10b]{schrijver2003combinatorial}.
    Thus, there exists a minimum cost integral circulation. Let $\bar{f}$ be the this integral circulation. We define $\lambda'_i = \bar{f}(s, v_i) / 2^\ell$ for each $i\in\{1,\dotsc,k\}$, then
    \begin{align*}
    \sum_{i \in [k]} \lambda'_i = \sum_{i \in [k]} \frac{ \bar{f}(s,v_i)}{2^\ell} = \frac{1}{2^\ell} \cdot \bar{f}(t, s) = \frac{1}{2^\ell} \cdot 2^\ell  = 1.
    \end{align*}
    Consequently, $\lambda'_i$ creates a valid convex combination for $y'$ where each $\lambda'_i$ is a multiple of~$1/2^\ell$. For each $e \in E$
    \begin{align*}
    |y_e - y'_e| =  \left| \sum_{i \in [k]} \lambda_i (S_i)_e - \sum_{i \in [k]} \lambda'_i (S_i)_e \right| =  \sum_{i \in [k]} |\lambda_i - \lambda'_i| \cdot (S_i)_e \leq \sum_{i \in [k]} \frac{1}{2^\ell} = k \cdot \frac{1}{2^\ell}.
    \end{align*}
    Since the integer circulation minimizes the total cost, we have
    \begin{equation}
        \label{eq:psr-total-rounded-cost}
        \cost{y'} = \sum_{e\in E} c_e \sum_{i \in [k]} \lambda'_i (S_i)_e = \frac{1}{2^\ell} \sum_{e \in E} \bar{f}(e) \cdot \mathrm{cost}(e)
        \le \frac{1}{2^\ell} \sum_{e \in E} f(e) \cdot \mathrm{cost}(e)
        = \cost{x}.
    \end{equation}
    Moreover, constructing $D_f$ and solving minimum cost flow can be done in polynomial time~\cite{schrijver2003combinatorial}.
\end{proof}

The \textsc{TreeMerge} algorithm described in \cref{sec:dependent-randomized-rounding} uses the following lemma to construct new random edge sets while preserving the degree of the former sets.

\EdgeSetDecomposition*

\begin{proof}
    For the randomized construction of $T$ and $T'$, we distinguish whether an edge is present in the symmetric difference~$S_1 \symdif S_2$ or not.
    For any edge~$e \in E$ where $(S_1)_e = (S_2)_e$, we set~$T_e = \big((S_1)_e + (S_2)_e \big)/2$ and~$T_e' = \big((S_1)_e + (S_2)_e \big)/2$.
    Next, partition the edges in~$S_1 \symdif S_2$ into simple paths and cycles with $E_1\dot\cup \cdots \dot\cup E_k = S_1 \symdif S_2$ with the following property: each odd-degree vertex is the endpoint of exactly one path and no path ends in an even-degree vertex.
    This easily follows from iteratively removing cycles and maximal paths.

    Let $i\in\set{1,\dotsc,k}$. Choose $C_i \dot\cup D_i = E_i$ such that no two edges in $C_i$ or in $D_i$ are adjacent. This is possible by alternatingly assigning edges to $C_i$ and $D_i$, as $E_i$ is an even length cycle or a path.
    We make a uniform binary random decision $R_{i} \in \set{0,1}$, which is independent of all $R_{i'}$, $i'\neq i$.
    This random variable indicates whether~$C_i$ is assigned to~$T$ or~$T'$ ($D_i$ is then assigned to the other edge set).
    More precisely, for each~$e \in E_i$ we set 
    \begin{equation}
        \label{eq:mom-random-matching-choice-for-pair}
        T_e = 
        \begin{cases}
            (C_i)_e & \text{if } R_{i} = 0, \\
            (D_i)_e & \text{if } R_{i} = 1,
        \end{cases}
        \qquad \text{and} \qquad 
        T_e' = 
        \begin{cases}
            (D_i)_e & \text{if } R_{i} = 0, \\
            (C_i)_e & \text{if } R_{i} = 1.
        \end{cases}
    \end{equation}
    In particular, $T_e + T_e' = (C_i)_e + (D_i)_e = 1 = (S_1)_e + (S_2)_e$.
    Since both outcomes~$R_i = 1$ and~$R_i = 0$ have probability $1/2$, it holds that $\E(T_e) = \E(T_e) = \big((C_i)_e + (D_i)_e \big)/2 = \big((S_1)_e + (S_2)_e \big)/2$ for all~$e \in S \symdif T$.
    Thus, it follows that $\E(T) = \E(T') = (S_1+S_2)/2$.
    Let~$\F = \set{\delta(v) \mid v \in A \cup B}$.
    Considering each vertex in a simple path or cycle is incident to at most two edges, we have~$| F \cap E_i| \le 2$ for all~$F \in \F$ and~$i \in \set{1, \dotsc, k}$.
    As the random choice for each path and cycle is independent, it holds that for all~$F \in \F$ and~$e \in F$, there is at most one edge~$e' \in F$ with~$e \neq e'$ such that~$T_e$ depends on~$T_{e'}$ (and the same for~$T'$).
    
    It remains to show that~$T(F), T'(F) \in \set{\floor{\,(S_1(F) + S_2(F))/2\,}, \ceil{\,(S_1(F) + S_2(F))(2)\,}}$.
    To this end, we distinguish whether a vertex $v \in A \cup B$ has even or odd degree. If $|\delta(v)|$ is even, then $v$ is always incident to exactly one edge in $C_i$ and one in $D_i$, for each $i\in\set{1,\dotsc,k}$.
    Hence, $T(F) = T'(F) = |\delta(v)| / 2 = (S_1(F) + S_2(F))/2$.
    If $|\delta(v)|$ is odd, then there is exactly one path that ends in $v$ due to the choice of decomposition into cycles and paths. 
    Consequently, the arguments above apply to all but one edge in $\delta(v)$ and therefore we have $|T(F) - T'(F)| = 1$. 
    This implies the claim.
\end{proof}

The last lemma bounds the change of a linear function between two consecutive levels of~\textsc{TreeMerge}.

\IncreaseOfLinearFunction*

\begin{proof}
    Let $T_i$, $i \in [2^{j}]$, be the random edge set created on the $j$-th level from the edge sets~$S_{2i-1}$,$S_{2i}$, $i\in [2^{j}]$ of the $(j+1)$-th level.
    Recall that the procedure from \cref{lem:decomposition} actually creates
    two alternative edge sets~$T_i,T'_i$ of which \textsc{TreeMerge} selects just one.
    However, the solution derived from $T_i$ satisfies \cref{eq:lfipl-bound-on-lf-increase} if and only if the one from $T'_i$ does. Hence, it suffices to show
    it for one of the two.
    Further, for the sake of convenience, we analyze the scaled expression~$|2^j \cdot a\T y'_j - 2^j \cdot a\T y'_{j+1}|$ instead of~$|a\T y'_j - a\T y'_{j+1}|$.
    Due to \cref{lem:decomposition}, it holds that $\E(T_i) = (S_{2i-1} + S_{2i})/2$.
    Thus,
    \begin{align*}
        \label{eq:ilf-expected-value}
        \E[2^j \cdot a\T y_{j}'] 
        = a\T \Big(\sum_{i \in [2^j]} \E[T_i] \Big)
        = \tfrac{1}{2} a\T \Big(\sum_{i \in [2^{j+1}]} S_{i} \Big)
        = 2^j \cdot a\T y_{j+1}'.
    \end{align*}
    Since \textsc{TreeMerge} independently constructs the~$T_i$ on the $j$-th level, any two random edge sets~$T_i$ and~$T_{i+1}$ are independent.
    Moreover, it follows from \cref{lem:decomposition} that for all~$T_i$ and~$e \in \delta(v)$, there exists at most one edge~$e' \in \delta(v)$ such that $(T_i)_e$ depends on~$(T_i)_{e'}$.
    Consequently, there exists a partition of $\delta(v)$ into~$P_1$ and~$P_2$ such that all variables $(T_i)_e$, $e\in P_1$, as well as~$(T_i)_e$, $e\in P_2$ are independent.
    Let~$u = \sum_{i \in [2^{j}]} U_{i} \in [0,1]^E$ and~$w = \sum_{i \in [2^{j}]} W_{i} \in [0,1]^E$ where~$U_i, W_i \in \{0,1\}^E$ are defined for all~$e \in E$ as
    \begin{align*}
        (U_i)_e =
        \begin{cases}
            1, & \text{if } e\in\delta(v) \text{ and } (T_i)_e \in P_1 \\
            0, & \text{otherwise}
        \end{cases}
        \; \text{and} \quad
        (W_i)_e =
        \begin{cases}
            1, & \text{if } e\in\delta(v) \text{ and } (T_i)_e \in P_2 \\
            0, & \text{otherwise}.
        \end{cases}
    \end{align*}
    Hence, we have~$2^j \cdot a\T y_{j}' = a\T(u + w)$.
    Next, we show that
    \begin{equation}\label{eq:chernoff-one}
        \Pr \left[ | a\T u - \E[a\T u] | > \tfrac{1}{2} t + \tfrac{1}{2} \sqrt{\E[a\T u]  \cdot t} \right] < \tfrac{1}{2 k^{10}}.
    \end{equation}
    For brevity, define $\mu = \E[a\T u]$.
    We distinguish two cases. If~$t > 4\mu$, then set~$\delta = t / (4\mu) > 1$.
    Notice that $a\T u$ is the sum of independent variables, each contained in $[0,1]$.
    We have
    \begin{align*}
        \Pr \left[ a\T u < \mu - t/4 \right] &= 0
    \end{align*}
    It follows from a standard Chernoff bound that
    \begin{align*}
        \Pr \left[ a\T u  > \mu + t/4 \right] 
        &= \Pr \left[ a\T u > (1 + \delta) \mu \right] \\
        &\le \exp(-\tfrac{1}{3} \delta \mu) \\
        &= \exp(- 11 \ln k) \\
        &\le \tfrac{1}{2 k^{10}}.
    \end{align*}
    If~$t \le 4\mu$, then set~$\delta= 1/2 \cdot \sqrt{t/\mu} \in (0,1]$. Again by a Chernoff bound, we have
    \begin{align*}
        \Pr \left[ |a\T u  - \mu| > \tfrac{1}{2}\sqrt{\mu t} \,\right]
        &= \Pr \left[ |a\T u - \mu| > \delta \mu \right] \\
        &\le 2 \exp(- \tfrac{1}{3} \delta^2 \mu) \\
        &= 2 \exp(- 11 \ln k) \le \tfrac{1}{2 k^{10}} .
    \end{align*}
    By symmetry, \cref{eq:chernoff-one} holds for $w$ as well. 
    We conclude that with probability at least~$1 - 1/k^{10}$, we have
    \begin{align*}
        | 2^j a\T y'_j - 2^j a\T y'_{j+1} | 
        &\le | a\T u - \E[a\T u] | + | a\T w - \E[a\T w] | \\ 
        &\le \tfrac{1}{2} t + \tfrac{1}{2} \sqrt{\E[a\T u]  \cdot t} + \tfrac{1}{2} t + \tfrac{1}{2} \sqrt{\E[a\T w] \cdot t} \\
        &\le t + \sqrt{\E[2^j \cdot a\T y'_j] \cdot t} \\
        &\le 2^{j/2} \big(t + \sqrt{a\T y'_{j+1} \cdot t}\big). \qedhere
    \end{align*}
\end{proof}
\section{Application to Budgeted Santa Claus Problem}
\label{sec:santa-claus}

In this section, we present our approximation algorithm the Budgeted Santa Claus Problem based on the dependent rounding scheme described in \cref{sec:dependent-randomized-rounding}.

\subsection{Linear Programming Formulation}
\label{subsec:lp}
Introducing an LP relaxation for the Budgeted Santa Claus Problem,  we first reduce the problem to its decision variant.
For a given threshold $T \ge 0$, the goal is to either
find a solution of value $T/ \alpha$ or determine that
$\OPT < T$, where $\OPT$ is the optimal value of the original optimization problem. This variant is equivalent to an
$\alpha$-approximation algorithm by a standard
binary search framework. 

Based on $T$, intuitively thought of as the optimal value, we partition the resources into two sets by size. Set $\cB$ consists of the \emph{big resources} with $\cB \coloneq \{ j \in \cR : v_j \geq T/\alpha\}$ and set~$\cS$ consists of the \emph{small resources} $\cS \coloneq \{ j \in \cR : v_j < T/\alpha\}$.
We use assignment variables that indicate whether a particular
resource is assigned to a particular player. For clarity,
we use different symbols for big and small resources.
Let~$x_{ib} \in [0,1]$ denote the portion of big resource $b \in \cB$ that player $i \in  \cP$ receives. Similarly, denote by $z_{is} \in [0,1]$ the portion of the small resource $s \in \cS$ that player $i$ receives.
Unlike the original problem, we allow these
assignments to be fractional in the relaxation.
Since naive constraints on these variables lead to an unbounded integrality gap, see e.g.~\cite{BansalS06}, we use
non-trivial constraints inspired by an LP formulation of Davies, Rothvoss and Zhang~\cite{DaviesRZ20}. Here, we make the structural
assumption that in any solution, a player either receives exactly
one big resource (and nothing else) or only small resources.
Towards the goal of obtaining a solution of value $T/\alpha$,
any big resource is sufficient for any player and receiving more
would be wasteful.
If there is a solution of value $T$, then there is also
a pseudo-solution such that each player either gets exactly one big
resource (and nothing else) or a value of at least $T$ from
small resources only. Note that it might be that the former
type of player only has a value of $T/\alpha$. Thus, if $T \le \OPT$, then the following relaxation called \LP{T}
is feasible and has a value at most $C$.
\begin{align}
    \min \; \sum_{i \in \cP}^{} &\left[\sum_{b \in \cB}^{}c_{ib}\cdot x_{ib} + \sum_{s \in \cS}^{}c_{is}\cdot z_{is} \right]  && \label{eq:economical-objective-function} \\
    \sum_{s \in \cS} v_{s}\cdot z_{is} &\geq T \cdot \left(1 - \sum_{b \in \cB} x_{ib}\right) &&\forall i \in \cP \label{eq:economical-value} \\
    z_{is} &\leq 1 - \sum_{b \in \cB}^{} x_{ib} &&\forall s \in \cS, i \in \cP \label{eq:economical-small-gifts} \\
    \sum_{i \in \cP}^{} x_{ib} &\leq 1 &&\forall b \in \cB \label{eq:economical-ub-player} \\
    \sum_{i \in \cP}^{} z_{is} &\leq 1 &&\forall s \in \cS \label{eq:economical-ub-z} \\
    \sum_{b \in \cB}^{} x_{ib} &\leq 1 &&\forall i \in \cP \label{eq:economical-ub-big-gifts} \\
    z_{is}, x_{ib}  &\geq 0 &&\forall s \in \cS, b \in \cB, i \in \cP \label{eq:economical-nonneg-x}
\end{align}

The constraints~\eqref{eq:economical-ub-player} and \eqref{eq:economical-ub-z} describe that
each big or small resource is only assigned once.
Justified by earlier arguments, constraint~\eqref{eq:economical-ub-big-gifts} ensures that each player receives at most one big resource.
Considering constraint~\eqref{eq:economical-small-gifts}, we only need to verify that the constraint is valid for integer solutions.
By our
assumption, if player $i$ receives one big resource, then it should not get any small resources,
which is exactly what the constraint expresses. Conversely, if the player does not receive any big resources, the
constraint is trivially satisfied. Similarly, there are two cases for
Constraints~\eqref{eq:economical-value}. If
player~$i$ receives one big resource, the constraint is trivially satisfied. Otherwise, it must receive a value of at least $T$ in small resources. During our rounding procedure in \Cref{sec:rounding-of-small-items}, we essentially lose
some value from small resources and can only guarantee
a value of $T/\alpha$ for players without a big resource.

\subsection{Technical Goals}
Let $(x, z)$ be a feasible solution to \LP{T}, where $x$ and $z$ represent the vectors of assignment variables corresponding to big and small resources, respectively. Formally, we have $x_{ib}, z_{is} \in [0,1]$ for $i \in \cP, b\in \cB, s \in \cS$.
We will define a randomized rounding procedure that constructs
a distribution over the binary variables
$X_{ib} \in \{0,1\}$ and $Z_{is}\in\{0, 1\}$ describing whether a big resource $b \in \cB$ or small resource $s\in \cS$ is assigned to player~$i \in \cP$.
For notational convenience, define $Y_i = 1 - \sum_{b\in B} X_{ib}$ as the indicator variable whether a player~$i$ \emph{does not}
get a big resource (and needs small resource). Similarly, let $y_i = 1 - \sum_{b\in \cB} x_{ib}$ be the corresponding value to~$Y_i$ from the corresponding value from the LP variables.

Our goal is that the total cost of assignments does not exceed the budget $C$ and the integral solution $(X,Z)$ is an $\alpha$-approximation solution with respect to the minimum value a player
receives. In other words, we want to obtain an integral solution $(X,Z)$ that satisfies the following two properties.
\smallskip
\begin{enumerate}
    \item $\sum_{i \in \cP}^{} \left[\sum_{b \in \cB}^{}c_{ib}\cdot X_{ib} + \sum_{s \in \cS}^{}c_{is}\cdot Z_{is} \right] \leq C$.
    
    \item Every player receives resources of value at least $T/\alpha$ with high probability.
\end{enumerate}

We first apply our dependent rounding scheme to round
the assignment of big resources to an integral one.
To cover the players that do not receive big resources, i.e.,
those with~$Y_i = 1$, we need to change the assignment of
small resources as well.
Initially, some small resources will be assigned fractionally
and even more than once.
In a second step, we transform the solution into one
where each small resource is assigned only once---incurring a
loss in the value that the players receive.

\subsection{Rounding of Big Resources}
\label{sec:rounding-of-big-items}
The following lemma summarizes the properties we derive
from the dependent rounding scheme. 
Note that while the assignment of small resources can change, it remains fractional for now.

\begin{lemma}
    \label{lem:cost-preservation}
    Let $(x,z)$ be a feasible solution to \LP{T, C}. There is a randomized algorithm that produces an assignment $X_{ib}\in\{0,1\}$, $i\in \cP$, $b\in\cB$, and $z'_{is}\in [0,1]$, $i\in \cP$, $s\in\cS$ such that
    with high probability
    \begin{enumerate}
        \item $\sum_{i\in\cP}\sum_{b\in\cB} c_{ib}\cdot X_{ib} +  \sum_{i\in\cP }\sum_{s\in\cS} c_{is} z'_{is} \leq C$, \label{eq:costs}
        \item Each big resource is assigned at most once, i.e., $\forall b\in \cB :\sum_{i\in\cP} X_{ib} \le 1$,
        \item Each small resource is assigned at most $O(\log n)$ times, i.e., $\forall s\in \cS: \sum_{i\in\cP} z'_{is} \le O(\log n)$,
        \item Each player receives either one big resource or a value of at least $T$ in small resources.
    \end{enumerate}
\end{lemma}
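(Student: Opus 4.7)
I would instantiate \cref{thm:assign} on the bipartite graph $G = (\cP \cup \cB, E)$ with $E := \{(i,b) : x_{ib} > 0\}$ and fractional assignment $x$, producing an integral $X\in\{0,1\}^E$. Let $Y_i := 1 - \sum_b X_{ib}$ and define the updated small-resource assignment
\[
    z'_{is} := \frac{z_{is}}{y_i}\cdot Y_i ,
\]
with $z'_{is} := 0$ when $y_i = 0$ (which is safe, since degree preservation forces $Y_i = 0$ in that case). Degree preservation directly yields $\sum_i X_{ib} \in \{0,1\}$ (property~(2)) and $Y_i \in \{0,1\}$; the value half of property~(4) is then immediate from \cref{eq:economical-value}, which guarantees $\sum_s v_s z'_{is} = (1/y_i)\sum_s v_s z_{is} \ge T$ whenever $Y_i = 1$.

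For property~(1) I need to fold the small-resource cost into the cost vector handed to \cref{thm:assign}. Letting $C_i := \sum_s c_{is} z_{is}$, I would use $\tilde c_{ib} := c_{ib} - C_i/y_i$; this may be negative, which is permissible in \cref{thm:assign} but not in \cref{prop:assign}. Expanding with $\sum_b X_{ib} = 1 - Y_i$ gives
\[
    \sum_{ib} c_{ib} X_{ib} + \sum_{is} c_{is} z'_{is} = \sum_i C_i/y_i + \tilde c\T X ,
\]
and Cost Preservation $\tilde c\T X \le \tilde c\T x$ then bounds the total by $\sum_i C_i/y_i + \tilde c\T x$, which is exactly the LP cost and hence at most~$C$.

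Property~(3) is where I expect the real difficulty. For each small resource $s$ we must control $\sum_i z'_{is} = \sum_i (z_{is}/y_i)\,Y_i$. By \cref{eq:economical-small-gifts} each coefficient $z_{is}/y_i$ lies in $[0,1]$, so it is natural to supply \cref{thm:assign} with an auxiliary concentration vector $a_{(b,s)}$ for every pair $(b,s)$, supported on $\delta(b)$ and equal to $z_{is}/y_i$ on the edge $(i,b)$. With $k = O(n^2)$ such vectors one has $\log k = O(\log n)$ and the identity
\[
    \sum_i z'_{is} \;=\; \sum_i \frac{z_{is}}{y_i} \;-\; \sum_b a_{(b,s)}\T X ,
\]
so the task reduces to a sufficiently tight lower bound on the aggregate $\sum_b a_{(b,s)}\T X$ around $\sum_b a_{(b,s)}\T x = \sum_i z_{is}/y_i - \sum_i z_{is}$. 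The main obstacle is obtaining deviation only $O(\log n)$ on this aggregate: union-bounding the per-vertex bound of \cref{thm:assign} over the $|\cB|$ big resources loses a factor of~$n$, and a plain aggregate Chernoff bound would only yield $O(\sqrt{n\log n})$. I expect the resolution to combine a preprocessing step that first assigns big resources deterministically to players with very small~$y_i$ with a Chernoff argument in the spirit of \cref{lem:linear-function-increase-per-level} applied directly to the aggregated linear function, exploiting that within each merge step of the tree-merge procedure the contributions from different big resources live on disjoint stars and are rounded independently. Once this concentration is in hand, $\sum_i z'_{is} \le \sum_i z_{is} + O(\log n) = O(\log n)$, and a union bound over all $s\in\cS$ together with the events for properties~(1), (2), (4) finishes the proof.
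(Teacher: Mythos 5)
Your properties (1), (2) and (4) go through: the degree-preservation argument and the cost-folding trick with $\tilde c_{ib} = c_{ib} - C_i/y_i$ (exploiting that \cref{thm:assign} allows negative costs) are sound and are essentially an algebraic reformulation of what the paper does. But property (3) is exactly where your proposal stops short, and the gap you flag is real, not a technicality. The function you need to control, $\sum_i (z_{is}/y_i)\,Y_i$, is supported on edges across \emph{all} player stars, so it violates the support restriction of \cref{thm:assign}; splitting it into vectors $a_{(b,s)}$ supported on $\delta(b)$ and union-bounding loses a factor $|\cB|$, and an aggregate Chernoff bound is governed by the fractional value $\sum_i z_{is}/y_i$, which can be polynomially large (small $y_i$'s), so even relative concentration does not give an additive $O(\log n)$ deviation. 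The proposed rescue (deterministically pre-assigning big resources to players with very small $y_i$, then a \textsc{TreeMerge}-internal Chernoff argument) is speculative: such a preprocessing step changes the assignment and its cost, and you give no argument that it preserves feasibility of the remaining LP, the budget, or the degree structure. As written, the lemma is not proved.

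The missing idea in the paper is a \emph{dummy node}: one builds the bipartite graph on $\cP\cup(\cB\cup\{d\})$, adds for every player $i$ with $y_i>0$ an edge $(i,d)$ with fractional value $y_i$ and cost $\sum_s c_{is}z_{is}/y_i$, and lets $Y_i$ be the indicator that this dummy edge is selected. Then the quantity $\sum_i (z_{is}/y_i)\,Y_i$ becomes, for each small resource $s$, a linear function $a^{(s)}$ with coefficients $z_{is}/y_i\in[0,1]$ (by \cref{eq:economical-small-gifts}) supported entirely on $\delta(d)$ --- a single vertex --- so the support restriction of \cref{thm:assign} is satisfied with only $k=n$ functions. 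Its fractional value is $\sum_i y_i\cdot z_{is}/y_i=\sum_i z_{is}\le 1$ by \cref{eq:economical-ub-z}, so the concentration guarantee gives deviation $O(\max\{\log n,\sqrt{1\cdot\log n}\})=O(\log n)$ per small resource, which is precisely property (3); the dummy edge costs simultaneously yield property (1) without your reweighting, and degree preservation at each player (fractional degree exactly $1$) yields (2) and (4). So the construction you are missing is not extra concentration machinery but a modelling step that relocates all the problematic variables onto one star.
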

\begin{proof}
We first describe the instance to which we will apply the dependent rounding procedure from \cref{thm:assign}.
We build a bipartite graph $\Gx{x,z} = (\cP\cup (\cB \cup \{d\}), E)$, where $\cP$ is the set of players, $\cB$ is the set of big resources, and $d$ is a \textit{dummy node}. 
The role of~$d$ can be summarized as: every player who does not get a big resource selects
the edge to~$d$. 
Let graph~$\Gx{x,z}$ contain an edge $(i,b) \in E$ labeled with cost~$c_{ib}$ between each big resource~$b$ and player~$i$ with $x_{ib} > 0$. 
Let $J$ denote the set of all these edges. 
For every player~$i$ with~$y_i > 0$, add an edge~$(i, d)$ to the dummy node of cost~$\sum_{s \in \cS} c_{is} \cdot (z_{is}/y_i) \in \R^n$.
Let $K$ denote the set of these edges and
set $E = J\cup K$.
Intuitively, $x_{ib}$ is a fractional edge selection
of edges in~$J$ and~$y_i$ a fractional edge selection of $K$ (where $y_i$ corresponds to edge $(i, d)\in K$).
This fractional edge set satisfies the following.
\smallskip
\begin{enumerate}
    \item[(i)] For each player $i$, we have $\sum_{b \in \cB} x_{ib} + y_i = 1$ due to the definition of $y_i$.
    \item[(ii)] For each big resource $b$, we have $\sum_{i \in \cP} x_{ib} \leq 1$, due to Constraint~\eqref{eq:economical-ub-player}.
\end{enumerate}

Applying the dependent rounding procedure of \Cref{thm:assign}
with $k = |\cS| = n$ linear functions (we defer a precise definition to the end of the proof),
we obtain an integral edge selection~$X_{ib}\in\{0,1\}$, $i\in \cP$, $b\in \cB$ and $Y_i\in\{0,1\}$, $i\in \cP$.
Here, similar to before, $Y_i$ defines the edge selection in~$K$,
i.e.,~edge $(i, d)$ is selected if $Y_i = 1$.
From~(ii) and \Cref{thm:assign}, it follows that
$\sum_{i\in \cP} X_{ib} \le 1$ for all $b\in \cB$,
which means that Property~2 of the lemma holds.
Further,
\begin{align*}
    \sum_{i\in\cP}\sum_{b\in\cB} c_{ib}\cdot X_{ib} +  \sum_{i\in\cP : y_i > 0} Y_i \cdot (\sum_{s\in\cS} c_{is}\cdot \frac{z_{is}}{y_i}) 
    &\le \sum_{i\in\cP}\sum_{b\in\cB} c_{ib}\cdot x_{ib} +  \sum_{i\in\cP : y_i > 0}(\sum_{s\in\cS} c_{is}\cdot \frac{z_{is}}{y_i}) \cdot y_i\\
    &\le \sum_{i\in\cP}\sum_{b\in\cB} c_{ib}\cdot x_{ib} +  \sum_{i\in\cP}\sum_{s\in\cS} c_{is}\cdot z_{is} \\
    &\leq C,
\end{align*}
where the first inequality follows from the cost preservation
property of \Cref{thm:assign} and the last inequality from the feasibility of~$(x,z)$. 
Notably, if $y_i = 0$, then $Y_i = 0$ as
there does not exist an edge $(i, d)$.
In particular, $Y_i = 1$ implies that $y_i > 0$.
We define the assignment of small resources as 
\begin{equation*}
z'_{is} = \begin{cases}
     z_{is}/y_i &\text{ if } Y_i = 1,\\
     0 &\text{ otherwise.}
\end{cases}
\end{equation*}
Due to Constraint~\eqref{eq:economical-small-gifts}, we have $z_{is} \le y_i$ and hence $z'_{is} \in [0, 1]$. Consequently, Property~1
immediately follows from the previous cost calculation and the definition of $z'_{is}$.
From~(i) and \Cref{thm:assign}, we obtain $\sum_{b\in \cB} X_{ib} + Y_i = 1$ for each $i\in \cP$.
In words, player~$i$ either gets a big resource or $Y_i = 1$.
In the latter case holds that
\begin{equation*}
    \sum_{s\in \cS} v_s z'_{is} = 
    \sum_{s\in \cS} v_s z_{is} / y_i \ge T .
\end{equation*}
Here, the inequality follows from the definition of $y_i$ 
and Constraint~\eqref{eq:economical-value}.
Therefore, Property~4 holds.

It remains to show Property~3, i.e, that every small resource
is assigned at most $O(\log n)$ times.
To this end, we define the linear functions provided to \Cref{thm:assign}: 
for each small resource $s\in \cS$,
there is one linear function~$a^{(s)}\in [0,1]^K$,
i.e., specified by the edges incident to~$d$.
Implicitly, the coefficient for all other edges is zero.
Thus,
the linear function satisfies the support restriction of 
\Cref{thm:assign}.
For $e = (i, d) \in K$, we define
$a^{(s)}_e = z_{is} / y_i$.
Using Constraint~\eqref{eq:economical-small-gifts}, it holds that
\begin{equation*}
\sum_{e = (i, d) \in K} y_i \cdot a^{(s)}_e = 
\sum_{e = (i, d) \in K} y_i \cdot \frac{z_{is}}{y_i} = 
\sum_{e = (i, d) \in K} z_{is} \le 1 . 
\end{equation*}
Finally, from the concentration bound of \Cref{thm:assign} follows
\begin{equation*}
\sum_{i\in \cP} z'_{is}
= \sum_{e = (i, d) \in K} Y_i \cdot z'_{is}
= \sum_{e = (i, d) \in K} Y_i \cdot z_{is} / y_i
= \sum_{e = (i, d) \in K} Y_i \cdot a^{(s)}_e \le O(\log n) . \qedhere
\end{equation*}
\end{proof}

\subsection{Rounding of Small Resources}
\label{sec:rounding-of-small-items}
In the previous subsection, we described how big resources $b \in \cB$ were integrally assigned to the players.
As some players did not receive any big resources and still need to be covered by small resources, let
$\cQ$ be the set of those players, i.e., players~$i\in\cP$ 
for which $Y_i = 1$. The linear program in the following lemma
corresponds to the property of the assignment variables
for small resources from the previous section.

\begin{restatable}{lemma}{RoundingSmallItems}
    \label{lem:rounding-small-items}
    Let $\cQ \subseteq \cP$ and consider the LP $\mathrm{small}(T, \beta)$ defined as
    \begin{align*}
        \sum_{s\in \cS} v_s \cdot z'_{is} &\ge T &\forall i\in\cP \\
        \sum_{i\in \cQ} z'_{is} &\le \beta &\forall s\in\cS\\
        z'_{is} &\ge 0 &\forall i\in \cQ, s\in \cS
    \end{align*}
    If $\mathrm{small}(T, \beta)$ has a fractional solution $z'$,
    then $\mathrm{small}(T/\beta - \max_{s\in\cS} v_s, 1)$ has an integral solution $Z$ that can be found in polynomial time with
    \begin{equation*}
       \sum_{s\in\cS}\sum_{i\in\cQ} c_{is}\cdot Z_{is}     
       \le \sum_{s\in\cS}\sum_{i\in\cQ} c_{is}\cdot z'_{is},
    \end{equation*}
\end{restatable}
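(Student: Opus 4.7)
The plan is to reduce to a standard Shmoys--Tardos style rounding~\cite{ShmoysT93} on a bipartite min-cost matching instance. First I would rescale: define $\hat z_{is}:=z'_{is}/\beta$ for all $i\in\cQ$, $s\in\cS$. Assuming $\beta\ge 1$ (the only relevant regime, since the caller \cref{lem:cost-preservation} produces $\beta=O(\log n)$), the assignment $\hat z$ satisfies $\sum_{i\in\cQ}\hat z_{is}\le 1$ for every $s$, $\sum_{s\in\cS}v_s\hat z_{is}\ge T/\beta$ for every $i\in\cQ$, and $\sum_{i,s}c_{is}\hat z_{is}=(1/\beta)\sum_{i,s}c_{is}z'_{is}\le \sum_{i,s}c_{is}z'_{is}$. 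It therefore suffices to round $\hat z$ to an integral $Z$ of no larger cost in which each resource appears at most once and each player obtains value at least $T/\beta-\max_{s\in\cS}v_s$.

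For the rounding itself, I would build Shmoys--Tardos slots. For every player $i\in\cQ$, list the resources with $\hat z_{is}>0$ in decreasing order of $v_s$ and greedily pack the fractional weights into slots $(i,1),(i,2),\dots,(i,k_i)$ so that every slot except possibly the last carries total weight exactly $1$; write $w_{(i,j),s}$ for the weight of $s$ placed in slot $(i,j)$. Then $w$ is a feasible fractional point of the bipartite $b$-matching LP that requires each full slot to be matched to exactly one resource, each partial slot to at most one, and each resource to at most one slot, with objective $\sum_{i,j,s}c_{is}\,w_{(i,j),s}=\sum_{i,s}c_{is}\hat z_{is}$. Since this LP's constraint matrix is totally unimodular, an integral optimum exists and can be computed via min-cost flow in polynomial time. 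Setting $Z_{is}=1$ iff resource $s$ is matched to some slot of player $i$, we get $\sum_{i\in\cQ}Z_{is}\le 1$ and $\sum_{i,s}c_{is}Z_{is}\le \sum_{i,s}c_{is}\hat z_{is}\le \sum_{i,s}c_{is}z'_{is}$, so the cost bound is automatic.

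The only genuinely subtle step is the value guarantee, which I expect to be the main obstacle. By the decreasing sort, the minimum value of any resource contributing to slot $(i,j)$ is at least the maximum value appearing in slot $(i,j+1)$, and hence at least the fractional slot value $F_{(i,j+1)}:=\sum_s v_s w_{(i,j+1),s}$. Since the integer matching pairs slot $(i,j)$ with some resource contributing to it, player $i$'s integer value telescopes to at least
\[
\sum_{j\ge 2}F_{(i,j)} = \sum_{s\in\cS} v_s\hat z_{is}-F_{(i,1)} \ge T/\beta-\max_{s\in\cS}v_s,
\]
using that slot $(i,1)$ has total weight at most $1$ and therefore $F_{(i,1)}\le \max_{s\in\cS}v_s$. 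The key point is that the monotone sort forces the ``lost'' slot in the telescoping to be the top slot, whose value is controlled by $\max_{s\in\cS}v_s$; without the sort, one could lose an arbitrary slot of much larger fractional value.
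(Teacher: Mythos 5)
Your proposal is correct and follows essentially the same route as the paper's proof: scale $z'$ down by $\beta$ to a solution of $\mathrm{small}(T/\beta,1)$, build Shmoys--Tardos player copies/slots by greedily packing the value-sorted fractional assignment into unit-weight blocks, round via integrality of min-cost bipartite matching, and bound the value loss by the top slot, which is at most $\max_{s\in\cS} v_s$. The telescoping argument you give (matched resource of slot $j$ has value at least the fractional value of slot $j+1$) is just a rephrasing of the paper's comparison of $\sum_{\ell\ge 1} v_{s(i,\ell)}$ against $\sum_{\ell\ge 0} v_{s(i,\ell)}$, so no substantive difference.
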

\begin{proof}
    Since $\mathrm{small}(T, \beta)$ is feasible, there exists a solution $z'$.
    Further, we obtain a (fractional) solution $z''$ to $\mathrm{small}(T/\beta, 1)$ 
    by dividing all variables of $z'$ by $\beta$.
    Using the approach by Shmoys and Tardos~\cite{ShmoysT93},
    we round $z''$ to an integral solution.
    For the sake of completeness, we provide the proof below.

    Assume without loss of generality that $\cS = \{1,2,\dotsc,|\cS|\}$ is ordered such that $v_s \ge v_{s-1}$ for all $s=2,3,\dotsc,|\cS|$.
    We construct an auxiliary bipartite graph $G$ (see \cref{fig:smallRounding} for an illustration).
    \begin{figure}
        \centering
        \includegraphics[scale=0.4]{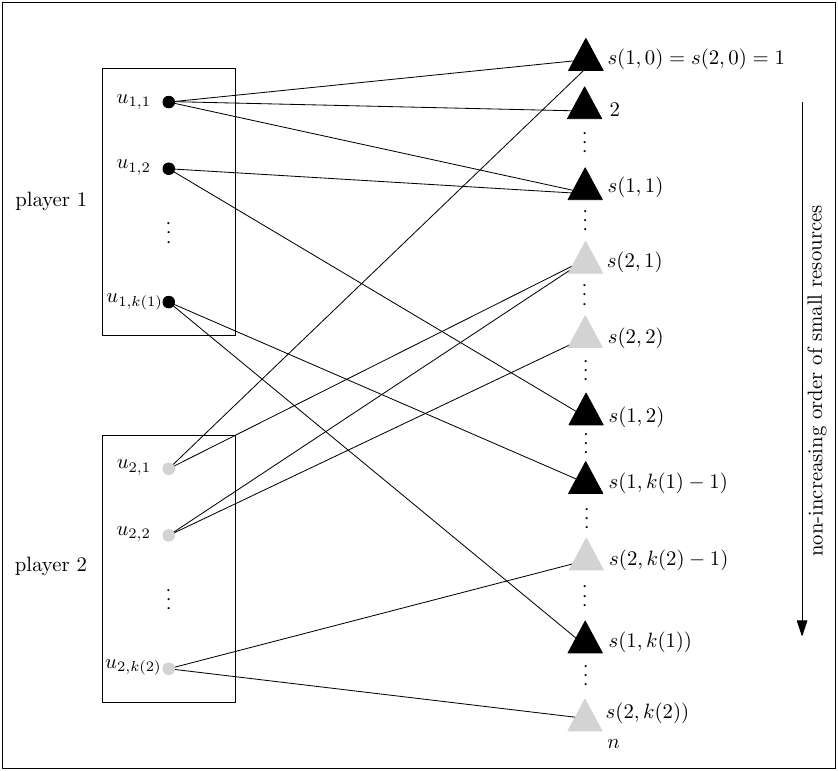}
        \caption{Bipartite graph G where on left side there are $k(i)$ copies of each player $i$ for an instance of two players with edges $(u_{i,\ell}, s)$ of weight $c_{is}$ for each $s = s(i,\ell-1),s(i,\ell-1)+1,\dotsc,s(i,\ell)$.} \label{fig:smallRounding}
    \end{figure}
    The elements
    on the right side of~$G$ are elements of
    $\cS$.
    On the left side, there are~$k(i) := \lfloor \sum_{s\in\cS} z''_{is} \rfloor$
    many vertices $u_{i, 1},\dotsc,u_{i,k(i)}$ for each
    $i\in\cQ$.
    In the fractional solution~$z''$, a player $Q$
    can get several small resources.
    Let~$k(i)$ denote their (rounded) number.
    Introducing~$u_{i,1},\dotsc,u_{i,k(i)}$ as
    copies of player $i\in Q$ allows us to argue about matchings
    (where every vertex is only involved in one assignment).
    Suppose we add one edge between every two vertices
    of the different sides of the bipartite graph.
    It is straight-forward that there exists a (fractional)
    left-perfect matching by distributing the resources
    assigned in $z''$ among
    the copies of each player.
    Due to the rounding in $k(i)$, this matching gives a slightly lower value to each player, but stays within the bounds we are aiming for.
    
    To round to a good integral matching, however, we require a specific definition of the
    fractional left-perfect matching.
    Essentially, we need a monotone assignment where
    the first copy $u_{i,1}$ of player $i\in Q$ has the highest value resources (the first in the order above) and the last player the lowest value resources.
    Then~$G$ only contains the edges that 
    are actually used in this assignment.

    This requires some careful definitions.
    For each $i\in \cQ$, we set
    $s(i,0) = 1$. This describes the first resource that
    can be fractionally assigned to player $i$.
    For $\ell = 1,2,\dotsc,k(i)$,
    we choose $s(i,\ell)$ as the element in $\cS$ that satisfies
    \begin{equation*}
        z''_{i,1} + \cdots + z''_{i,s(i,\ell)-1} < \ell
        \text{ and }
        z''_{i,1} + \cdots + z''_{i,s(i,\ell)} \ge \ell .
    \end{equation*}
    Note that $s(i,\ell)$ exists, because the sum of
    all $z''_{i,j}$ is at least $k(i) \ge \ell$.
    We only assign resources $s(i,\ell-1),\dotsc,s(i,\ell)$ to copy~$u_{i,\ell}$.
    Intuitively, the resources $1,2,\dotsc,s(i,\ell-1)-1$ should be exclusively used for the copies $u_{i,1},\dotsc,u_{i,\ell-1}$. Simply because the sum of fractions associated with $i$ (according to~$z''$) is not enough to cover the copies and $u_{i,\ell}$ should not receive any of them in order to maintain the monotonicity goals.
    Similarly, as the sum of fractions
    of resources $1,2,\dotsc,s(i,\ell)$ belonging to $i$
    exceeds $\ell$, they are enough to cover all players in
    $u_{i,1},\dotsc,u_{i,\ell}$.
    Thus, it is not necessary to give any less valuable resources to 
    $u_{i,\ell}$.

    Consequently, for all $i\in\cQ$ and $\ell=1,2,\dotsc,k(i)$, we
    introduce an edge $(u_{i,\ell}, s)$ of weight~$c_{is}$ for
    each $s = s(i,\ell-1),s(i,\ell-1)+1,\dotsc,s(i,\ell)$.
    We now formally show that there is a left-perfect
    fractional matching
    of weight at most $\sum_{s\in\cS}\sum_{i\in\cQ}c_{is} \cdot z''_{is}$.
    Towards this,
    consider some $i\in \cQ$ and $\ell \in \{1,2,\dotsc,k(i)\}$.
    We select edge $(u_{i,\ell}, s(i, \ell-1))$ to an
    extend of
    \begin{equation*}
         z''_{i,1} + \cdots + z''_{i,s(i, \ell-1)} - (\ell - 1) \in [0, z''_{i,s(i, \ell-1)}].
    \end{equation*}
    For $s = s(i, \ell-1)+1,\dotsc,s(i, \ell)-1$, we pick edge $(u_{i,\ell}, s)$ to an extend of $z''_{i,s}$.
    Finally, we choose edge $(u_{i,\ell}, s(i, \ell))$ to an
    extend of
    \begin{equation*}
        \ell - (z''_{i,1} + \cdots + z''_{i,s(i, \ell)-1}) \in [0, z''_{i,s(i, \ell)}] .
    \end{equation*}
    This fractional selection of edges satisfies the following properties.
    \smallskip
    \begin{enumerate}
        \item For each $i\in\cQ$ and $\ell\in\{1,2,\dotsc,k(i)\}$, the total fractional amount of selected 
        edges that are incident to~$u_{i,\ell}$ 
        is exactly $1$.
        \item For each $i\in\cQ$ and $s\in\cS$,
        the total fractional amount of selected edges that are between~$u_{i,1},\dotsc,u_{i,k(i)}$ and $s$
        is at most $z''_{i,s}$.
    \end{enumerate}
    Property~2 implies that the total fractional amount that edges incident to $s$ (over all $i\in\cQ$) are selected is at most $1$.
    Hence, the total weight is
    at most the cost of $z''$.
    For a bipartite graph, the set of all fractional
    matchings is precisely the convex hull of integral
    matchings, see e.g.~\cite[Chapter~18]{schrijver2003combinatorial}. Furthermore, $z''$ must lie on a facet spanned by only left-perfect
    matchings.
    Thus, there must also
    exist an integral left-perfect matching $M$ of weight at most
    \begin{equation*}
        \sum_{i\in\cQ} \sum_{s\in\cS} z''_{i, s} c_{i, s}
        \le \sum_{i\in\cQ} \sum_{s\in\cS} z'_{i, s} c_{i, s} .
    \end{equation*}
    We can find such a matching using standard algorithms
    for minimum weight bipartite matching.
    We interpret $M$ as an integral assignment $Z$
    where each $i\in \cQ$ receives all $s\in \cS$,
    for which there is an edge $(u_{i,\ell}, s)$ in $M$
    for some $\ell\in\{1,2,\dotsc,k(i)\}$.
    Finally, we analyze how much value each $i\in\cQ$
    receives in this assignment.
    Notice that all resources $s\in\cS$
    that are connected to $u_{i,\ell}$ have a value
    of at least $v_{s(i,\ell)}$.
    Therefore, $i$ receives a total value of at least
    \begin{equation*}
        v_{s(i,1)} + \cdots + v_{s(i,k(i))} .
    \end{equation*}
    On the other hand, 
    since $z''_{i,1} + \cdots + z''_{i,s(i, \ell ) - 1} + z''_{i,s(i, \ell)} < \ell + z''_{i,s(i, \ell)} \le \ell + 1$ for each $\ell$,
    the total fractional amount of resources that $i$ receives in $z''$
    with value at least $v_{s(i, \ell)}$ is less than $\ell + 1$.
    As a result, the total value that $i$ received in $z''$
    is at most
    \begin{equation*}
        v_{s(i,0)} + \cdots + v_{s(i,k(i))} .
    \end{equation*}
    This is at least $T/\beta$, because of $z''\in\mathrm{small}(T/\beta,1)$.
    As a consequence, in $Z$ player $i$ receives a value of at least
    $T/\beta - v_{s(i,0)} \ge T/\beta - \max_{s\in\cS} v_s$.
\end{proof}

\subsection{Approximation Factor}
Concluding the previous subsections, this section provides an $\alpha$-approximation for the Budgeted Santa Claus Problem, where $\alpha = \Order(\log n)$. 
\SCapprox*
\begin{proof}
   Let $\beta = O(\log n)$ be the value from Property~3 of \Cref{lem:cost-preservation}. We define $\alpha = 2\beta$
   and run our binary search over value $T$,
   which is then used in our definition of big and small resources.
   Then we solve the LP relaxation \LP{T}. 
   Let $(x,z)$ be the resulting solution. 
   If the cost of the solution is more than $C$, we fail (and
   increase the value of $T$). Assume now that $(x,z)$ has
   a cost of at most $C$.
   In \Cref{lem:cost-preservation}, using the dependent rounding procedure from \Cref{thm:assign}, we show that the fractional assignments $x$ of big resources to players can be rounded to an integral assignment $X$ and the assignment of small resources $z$ can be changed to $z'$ (which is still fractional), such that with high probability each small resource up to
   $\beta$ times and the cost is still below $C$. 
   \Cref{lem:rounding-small-items} proves that $z'$
   can be rounded to an integral assignment $Z$ such that
   the cost does not increase and each player $i$ that does
   not receive a big resource gets small resources of total value at least
   \begin{align*}
    \frac{T}{\beta} - \max_{s\in\cS} v_s &\geq \frac{2T}{\alpha} - \frac{T}{\alpha} = \frac{T}{\alpha}. \qedhere
    \end{align*}
\end{proof}

\section{Conclusion}
Based on the finding in this paper there are several interesting questions arising for future research.
For the Budgeted Santa Claus Problem, a naturally arising question is whether the approximation factor of
$O(\log n)$ can be improved. Notably, the special case of restricted assignment (without a budget constraint)
admits a constant approximation due to Feige~\cite{Feige08}.
We are not aware of any hardness
results indicating that such a result cannot hold for our problem. As an intermediate question,
one could look at the bi-criteria approximation that approximates both the minimum player value and the cost by a constant. This would still generalize the aforementioned algorithm for restricted assignment.

Another intriguing question is whether a dependent rounding scheme exists for rounding matroid bases that simultaneously guarantees cost preservation and Chernoff-type concentration 
like \textsc{SwapRounding}~\cite{ChekuriVZ10} does. 
It seems likely that the techniques from this paper
transfer at least to a limited
class of matroids, namely strongly base-orderable
matroids, because
these have very strong decomposition properties for the
symmetric difference of two bases.
It might, however, require other ideas to generalize to
arbitrary matroids.

\bibliographystyle{plainurl}
\bibliography{paper}

\appendix
\section{Appendix}

\subsection{Santa Claus Problem where all resources need to be assigned}
\label{subsec:appendix-santa-claus-with-all-resources-assigned}
In the following, we show that the Budgeted Santa Claus Problem with the requirement that all items need to be assigned is not harder than the variant we study. 
For each resource $j$, adjust the cost of assigning $j$ to any player $i$ to $c_{ij}' = c_{ij} - \min_{i \in \mathcal{P}} c_{ij}$. This reduces the total budget by the sum of these minimum values across all resources, $C' = C - \sum_{j \in \mathcal{R}} \min_{i \in \mathcal{P}} c_{ij}$. Then we solve the reduced problem without enforcing that all the resources have to be assigned. If some resources remain unassigned, we allocate them to the players with zero cost (i.e., players $i \in \cP$ with $c_{ij}'  = 0$). This ensures that all resources are assigned without exceeding the original budget, as the reduced budget $C'$ already accounted for these assignments. This reduction maintains the approximation ratio of the solution, as values remain the same and costs are not approximated.

\end{document}